\documentclass[11pt, letterpaper]{article}

\usepackage{fullpage}
\usepackage{verbatim}
\usepackage{amssymb}
\usepackage{amsmath}
\usepackage{amsthm}
\usepackage{graphicx, epstopdf}
\usepackage{algorithm}
\usepackage[titletoc, title]{appendix}

\usepackage{algpseudocode}
\usepackage{algorithm}
\usepackage{capt-of, blindtext}

\usepackage{geometry}
 \geometry{
 letterpaper,
 left=1.18in,
 right=1.18in,
 top=0.98in,
 bottom=0.98in,
 }
 
\usepackage{apacite}

\usepackage{tikz}
\usetikzlibrary{arrows}

\newtheorem{definition}{Definition}

\newtheorem{theorem}{Theorem}

\newtheorem{assumption}{Assumption}

\newcommand{\R}{\mathbb{R}}
\newcommand{\ball}{\mathcal{B}_\delta}

\newcommand{\overdown}{\mbox{ }\begin{tikzpicture}
\draw (0,.1) --(.3,.1);
\draw [->](.3,.1) --(.45,0);
\end{tikzpicture} \mbox{ } }

\title{\LARGE \bf
Dual-mode Dynamic Window Approach to Robot Navigation with Convergence Guarantees
}

\author{Greg Droge \thanks{Email: greg.droge@usu.edu.} \thanks{Department of Electrical and Computer Engineering, Utah State University, Logan, UT 84322, USA.}
}
\date{\vspace{-5ex}}

\begin{document}


\maketitle

\begin{abstract}
In this paper, a novel, dual-mode model predictive control framework is introduced that combines the dynamic window approach to navigation with reference tracking controllers.   This adds a deliberative component to the obstacle avoidance guarantees present in the dynamic window approach as well as allow for the inclusion of complex robot models.  The proposed algorithm allows for guaranteed convergence to a goal location while navigating through an unknown environment at relatively high speeds.  The framework is applied in both simulation and hardware implementation to demonstrate the computational feasibility and the ability to cope with dynamic constraints and stability concerns.
\end{abstract}

\section{Introduction}
Vehicle motion planning in unknown environments forms an integral part of robotics.  It has been the subject of much research and is arguably a solved problem under certain conditions, e.g. \cite{Latombe1990, Arkin1998, LaVell2006, Goerzen2010}.  However, when stability becomes an issue, e.g. at high speeds, or when optimality considerations are to be taken into account, the problem is not yet solved.  Even when the environment is completely known in advance, optimal solutions can be difficult to compute as dynamic constraints, such as acceleration and motion limitations, must be considered, especially as the speed of the robot increases, e.g. \cite{Goerzen2010, Stachniss2002}.    In \cite{Goerzen2010} it was noted that analytic solutions to the optimal motion planning problem are only computable for the most simple of cases, which leads to the need for approximation algorithms for pretty much any realistic scenario.  The difficulty associated with incorporating dynamic constraints is compounded in unknown environments as a solution must be repeatedly computed to take new environmental data into account.  In this paper, we address this very issue by combining the dynamic window approach (DWA) with a fast, deliberative planner through a dual-mode model predictive control (MPC) construction.

DWA provides a direct way of incorporating dynamic constraints for fast navigation through an unknown environment, but lacks general convergence guarantees,  \cite{Fox1997, Ogren2005, Brock1999}.  The basic concept of DWA is, at each time step, to choose an arc for the robot to execute based on some predefined cost.  The cost provides a way to select more desirable arcs, such as including considerations of distance to obstacles and progression towards the goal.  Arcs are natural primitives for most wheeled mobile platforms as they can be realized by commanding constant translational and rotational velocities.  Moreover, executing arcs allows the motion of the robot to quickly be simulated into the future, which allows obstacle avoidance to be guaranteed  without a significant computational burden \cite{Fox1997}. These benefits have lead DWA to be a default ``local planner'' in the increasingly popular robot operating system's (ROS) navigation package,  \cite{ROS}.
 
The term ``local planner'' is used as DWA does not incorporate information about the connectivity of the free space when planning its action.  As such, it is known to get stuck in local minima, noted, for example, in \cite{Ogren2005, Brock1999, Stachniss2002}.  To modify the algorithm to have guarantees of convergence, \cite{Ogren2005} noted that DWA fits into the MPC framework.  In fact,  MPC is a control strategy which evaluates a cost over a certain time horizon, finds control inputs over that horizon, applies the first one (or several) of the control inputs, and then repeats the process at the next time step, e.g. \cite{Mayne2000, Primbs1999}.  Building on results from the MPC literature, \cite{Ogren2005} proved that a similar formulation to DWA can provide convergence guarantees when the motion cost is based on a navigation function, e.g. \cite{Rimon1992}.  

In a way, one can think of the navigation function in \cite{Ogren2005} as adding a deliberative component to DWA, albeit a very specific deliberative component.  The contribution of this paper is a generalization of this idea, and, similar to \cite{Ogren2005}, we capatilize on existing theory of MPC to provide guarantees of convergence, but instead of navigation functions, the deliberative component is allowed to be a generic path-planner.  Thus, a ``global planner'' finds a path, giving guarantees such as completeness, e.g. \cite{LaVell2006}, and the MPC framework takes into account the full dynamic model of the system to give guarantees of convergence to the goal location.  

The remainder of the paper will proceed as follows:  Background on the DWA algorithm and MPC are given in Section \ref{sec:Background}.  The dual-mode arc-based MPC approach is detailed in Section \ref{sec:dual_mode_MPC}.  Implementation details are given in Sections \ref{sec:Unicycle} and \ref{sec:Magellan} for an Irobot Magellan-pro.  It is shown to successfully run at 80\% of its maximum velocity, while traversing tight corridors in an unknown environment despite the robots severely limited computational resources and sluggish dynamics.  A further demonstration of the ability of the framework to deal with complicated dynamics, while maintaining the convergence guarantees, is presented in Section \ref{sec:Segway} through a simulation of an inverted pendulum robot.  The paper ends with concluding remarks in Section \ref{sec:Conclusion}.

\section{Preliminaries}
\label{sec:Background}
The backbone of the algorithm presented in this paper is an MPC technique which combines deliberate path planners with parameterized control laws to guarantee that the goal position of the planned path is reached.  This section provides the necessary background to both the choice of control laws to be used and the MPC framework to be employed in subsequent sections.

\subsection{The Dynamic Window Approach}
\label{ssec:back_DWA_MPC}
For a navigation algorithm to be successfully applied in unknown environments, it is essential to have guarantees on both obstacle avoidance and progression towards the goal location.  These guarantees need to take the dynamic constraints of the vehicle into account, especially as the speed of the vehicle increases, e.g. \cite{Goerzen2010, Stachniss2002}.  Moreover, the navigation scheme cannot consume too much computational power as the robot must perform other tasks, such as process sensor information and map the environment.  To incorporate all of these demands, we build upon the DWA algorithm presented in \cite{Fox1997}, which possesses all of the mentioned qualities, albeit without a guarantee of progression towards the goal, as noted, for example, in \cite{Ogren2005, Brock1999, Stachniss2002}.

The DWA algorithm directly deals with the dynamic constraints of the robot in order to perform fast navigation in unknown environments.  Many wheeled mobile platforms, cannot move orthogonal to the direction of motion.  This nonholonomic constraint can be expressed using the unicycle motion model, e.g. \cite{LaVell2006}:
\begin{equation}
\label{eq:unicycle}
\begin{bmatrix}
\dot{x}_1 \\
\dot{x}_2 \\
\dot{\psi}_3
\end{bmatrix} = 
\begin{bmatrix}
v \cos(\psi) \\
v \sin(\psi) \\
\omega
\end{bmatrix},
\end{equation}
where $(x_1,x_2)$ is the two dimensional position of the robot, $\psi$ is the orientation, and $v$ and $\omega$ are the translational and rotational velocities of the vehicle.  To directly deal with this constraint, the DWA algorithm utilizes it as a basis for control by considering arc-based motions, corresponding to constant $v$ and $\omega$ values.  

Commanding constant $v$ and $\omega$ values has two advantages worth mentioning.  First the $(v, \omega)$ pair can be chosen such that the transients due to the real dynamics of the vehicle can be ignored after a small window of time.  This allows for quick simulation into the future to ensure obstacles are avoided.  Second, by choosing to execute the $(v, \omega)$ pair over the simulated horizon instead of solving for a trajectory of inputs on that horizon (as is typical in MPC),  the computational burden is greatly reduced, \cite{Fox1997, Droge2012a, Park2012}.  Thus, DWA is able to guarantee obstacle avoidance while taking into account the dynamic constraints without imposing unreasonable computational burdens.  

However, in \cite{Brock1999} it was noted that the original DWA algorithm does not incorporate information about the connectivity of the free space to the goal.  Building on this idea, \cite{Ogren2005} utilized a control scheme based on incorporating navigation functions into the cost and used MPC stability techniques to ensure convergence to the goal. 
The navigation functions were chosen for a couple reasons.  They allow the cost to form a control-Lyapunov function (CLF), which is a nonlinear control method for ensuring stability, e.g. \cite{Khalil2002}, commonly used for convergence guarantees in MPC \cite{Mayne2000, Primbs1999}.  Also, the navigation functions provide an intuitive worst-case scenario: the worst the robot will do is simply follow the navigation function to the goal.

This paper builds on the idea of introducing concepts from the MPC literature to ensure convergence of a DWA-like algorithm.  By combining an arc-based controller with a reference-tracking controller, information about the connectivity of the free space to the goal can be realized with a generic path-planner instead of the need for navigation functions.  
Guarantees on the convergence to the goal location is established based on properties of the reference tracking controller as well as conditions imposed on the cost.  

\subsection{Dual-mode Model Predictive Control}
\label{ssec:dual_mode_MPC}
Dual-mode MPC is a technique that uses a stabilizing control law within an MPC framework to ensure convergence to a desired state, e.g. \cite{Mayne2000, Maniatopoulos2013}.  In this section we briefly outline a general optimal control based optimization framework for MPC and then discuss how the stabilizing control comes into play.

As mentioned in the introduction, at each iteration of an MPC algorithm a cost is minimized to find the optimal control over a certain horizon.  We denote the time at which the optimization takes place as $t_0$ and the length of the horizon as $\Delta$.  To explicitly account for the fact that MPC requires the system to simulate forward in time, we introduce a double notation for time: $x(t; t_0)$ and $u(t; t_0)$ are state and input at time $t$ as computed at time $t_0$.  Note that $x(t;t)$ and $u(t;t)$ denote the actual state and input at time $t$.  It is assumed that $x(t; t_0) \in X \subset \R^n$ and $u(t; t_0) \in U \subset \R^m$, with the dynamics denoted as $\dot{x}(t; t_0) = f(x(t; t_0),u(t; t_0))$.  The optimization problem under consideration can then be written in the following form: 
\begin{equation}
	\min_{u(\cdot; t_0)} \int_{t_0}^{t_0 + \Delta} L\bigl(x(t; t_0),u(t; t_0) \bigr)dt + \Psi(x(t_0+\Delta; t_0)),
	\label{eq:dual_mode_cost}
\end{equation} 
$$
	\mbox{s.t } \dot{x}(t; t_0) = f(x(t; t_0),u(t; t_0)), \mbox{ } x(t_0+\Delta; t_0) \in X_f,
$$
where $X_f \subset X$ is a terminal constraint set and $x(t_0; t_0)$ is known.

In dual mode MPC, it is assumed that a controller, $u = \kappa_f(x)$, exists which will render the desired equilibrium locally stable.  Without loss of generality, we assume this point to be the origin.  The way $\kappa_f(x)$ is incorporated is through an interplay between $\kappa_f(x)$, $X_f$, and the cost to give asymptotic stability of the origin.  We present the conditions on these components, given in \cite{Mayne2000}, for sake of completeness:
\begin{itemize} 
\item [B1)] $X_f$ closed 
\item [B2)] $\kappa_f(x) \in U$, $\forall x \in X_f$ 
\item [B3)] $X_f$ is positively invariant under $\dot{x} = f(x,\kappa_f(x))$
\item [B4)] $\frac{\partial \Psi}{\partial x}(x) f(x, \kappa_f(x)) + L(x, \kappa_f(x)) < 0 $ $\forall$ $x \in X_f$, $x \neq 0$
\end{itemize}
along with modest conditions on the dynamics (i.e. continuity, uniqueness of solutions, etc, e.g. \cite{Mayne2000, Chen1998}).

The method is called ``dual-mode'' MPC as an optimal control mode of operation is combined with a stabilizing control mode of operation, although the stabilizing controller is never actually used to control the system.  Intuitively, the controller is useful for a couple of reasons.  Condition B3 allows the controller to be used as a hot start for optimization as, at the current optimization time, the solution from the previous time appended with the control generated by the terminal mode will satisfy the input and terminal constraints.  Also, condition B4 states that the cost forms a CLF when using the terminal controller, making convergence somewhat expected.

\section{Dual-mode Arc-based MPC}
\label{sec:dual_mode_MPC}

The proposed dual-mode arc-based MPC algorithm builds upon DWA by using a dual-mode MPC approach to incorporate a reference tracking controller to ensure that the robot converges to some goal position while incorporating dynamic constraints.  This section develops the dual-mode approach by first presenting the algorithm, giving a convergence theorem, and then discussing how a behavior-based approach can be used as part of the optimization.

\subsection{Dual-mode Arc-baseed MPC Algorithm}
To present the dual-mode arc-based MPC algorithm, we use the behavior-based MPC framework presented in \cite{Droge2012a}.    In \cite{Droge2012a}, it is assumed that the robot will execute a given sequence of control laws, denoted as $(\kappa_0, \tau_0),(\kappa_1, \tau_1), ... , (\kappa_N, \tau_N)$, where $\tau_i$ indicates the time when the system will switch from executing the control law $\kappa_{i-1}$ to the control law $\kappa_i$.  Each control law is a function of the state and a tunable vector of parameters, written as $\kappa_i(x(t;t_0), \theta_i)$.  This allows the system dynamics to be written as $\dot{x}(t; t_0) = f(x(t;t_0), \kappa_i(x(t;t_0), \theta_i)$ for $\tau_i \leq t < \tau_{i+1}$.

In the proposed framework, we assume that the unicycle motion model in (\ref{eq:unicycle}) forms part of the state dynamics where $v$ and $\omega$ are either the inputs, as in (\ref{eq:unicycle}), or additional states of the system.  The first $N$ controllers in the sequence regulate the dynamics to desired constant velocities, with the parameter vector being the desired velocities on that interval, i.e. $\theta_i = [v_i,\mbox{ } \omega_i]^T$ for $i = 0, ..., N-1$.  The final control law is designed to track a reference trajectory, $y_d(t) \in \R^2$.  There is no need for a parameter vector, so we deviate from the original notation and write the final controller as $\kappa_N(x(t), y_d(t))$, where the time index is included to denote that $y_d(t)$ is time varying. An example of a possible trajectory is illustrated in Figure \ref{fig:cntrl_seqence} where three arc-based controllers are executed back to back with a reference tracking controller at the end. 

\begin{figure*}[!b]
\begin{center}
\includegraphics[trim=2cm 1cm 0.5cm 2cm, clip=true, width=.35\columnwidth]{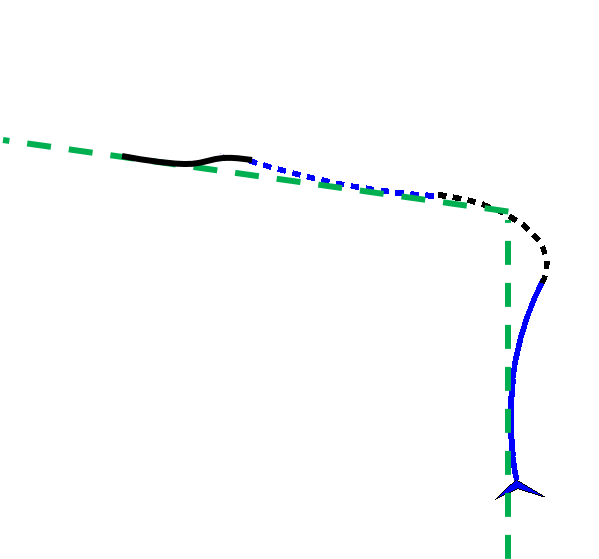}
\end{center}
\caption{This figure shows an example of a dual-mode arc-based trajectory.  The robot is shown as a triangle with a planned trajectory extending from it. The trajectory is created from three arc-based controllers appended back to back with a reference tracking controller at the end.  The different portions of the trajectory are differentiated by color and line styles.  A reference trajectory is also shown as a dashed line.} 
\label{fig:cntrl_seqence}%
\end{figure*}

The reference trajectory is produced by planning a path to the goal location, $y_{goal} \in \R^2$, and creating a continuous mapping from time to a position on the path.  In the examples presented in Sections \ref{sec:Magellan} and \ref{sec:Segway}, the path planning is done using $A^*$.  Mapping from time to position is done by respecting translational velocity constraints.  However, we note that this is merely an example and not essential to the formulation of the algorithm.

Various changes are made to the problem in (\ref{eq:dual_mode_cost}) to account for the fact that the control input is now defined in terms of the controllers, i.e. $u(t; t_0) = \kappa_i(x(t;t_0), \theta_i)$ $\tau_i \leq t < \tau_{i+1}$.  Most notably, the optimization is done over the switch times and parameters.  This replaces an infinite dimensional control trajectory optimization problem with a finite dimensional parameter optimization problem.  The instantaneous cost is modified accordingly to be a function of the parameter vectors, $\theta_i$.  Also, a key point to note is that the environment is not completely known at the time of optimization.  So the set of free or unexplored positions, $B_{free}(t_0) \subset \R^2$, is included as a term in the instantaneous cost.  The final change made is to explicitly represent the fact that the terminal cost depends upon the reference trajectory where $y(t;t_0) = h(x(t;t_0)) \in \R^2$ is the position of the robot that is expected to follow the reference trajectory.  The optimization problem can be written to incorporate these changes as:
\begin{equation}
	\min_{\tau, \theta} J(\tau, \theta) = \int_{t_0}^{t_0+\Delta} L\bigl(x(t;t_0), \theta, B_{free}(t_0)\bigr)dt  +  \Psi \bigl(y(\tau_{N+1};t_0), y_d(\tau_{N+1}) \bigr),
\label{eq:cost}
\end{equation}
$$
\mbox{s.t. } \dot{x}(t; t_0) = \begin{cases} 
f\bigl(x(t; t_0), \kappa_i(x(t; t_0), \theta_i)\bigr) & \tau_i \leq t < \tau_{i+1}\mbox{, }  i = 0, ..., N-1 \\
f\bigl(x(t; t_0), \kappa_i(x(t; t_0), y_d(t))\bigr)   & \tau_N \leq t < \tau_{N+1} 
\end{cases}
$$
where $\tau = [\tau_0, ..., \tau_{N+1}]^T$, and $\theta = [\theta_1^T, ..., \theta_{N-1}^T]^T$.  Note that we actually do not optimize with respect to the first and last elements of $\tau$, rather we fix them as $\tau_0 = t_0$ and $\tau_{N+1} = t_0 + \Delta$.    To denote that the parameters only enter the cost for the time period over which they are used, the instantaneous cost can be written as:
$$
L\bigl(x(t;t_0), \theta, B_{free}(t_0)\bigr) = \begin{cases} 
L_i \bigl(x(t;t_0), \theta_i, B_{free}(t_0)\bigr) & \tau_i \leq t < \tau_{i+1}\mbox{, }  i = 0, ..., N-1 \\
L_N \bigl(x(t;t_0), B_{free}(t_0)\bigr) & \tau_N \leq t < \tau_{N+1} 
\end{cases}.
$$
Also note that we do not have any terminal constraint on the state.  To ensure reference tracking, a terminal constraint will be re-introduced in Section \ref{ssec:convergence}.  However, to avoid computational difficulties associated with satisfying the terminal constraint during optimization, the constraint is introduced as a cost barrier in the terminal cost.

We make one final note on obstacle avoidance before stating the dual-mode arc-based MPC algorithm.  Previously unseen obstacles may render the desired reference trajectory or previously found parameters invalid due to an unforseen collision.  For the scenarios presented in Sections \ref{sec:Magellan} and \ref{sec:Segway}, it is conceivable that an obstacle avoidance controller, $\kappa_{avoid}(x)$, could consist of steering away from obstacles while slowing down as fast as possible.  The dynamic models allow angular velocities or accelerations to be controlled independent of the control of the translational velocities or accelerations.  The angular velocities or accelerations are also quite responsive.  

However, we have found the design of such an avoidance control law unnecessary.  A number of predefined parameters defining different arc-based motions can be utilized.  As will be discussed in detail in Section \ref{ssec:Optimization}, these parameters can be quickly modified to ensure collision avoidance.  To incorporate either the design of an obstacle avoidance controller or a method of quickly searching the parameter space, the following assumption is given:

\begin{assumption}{(Collision Avoidance)}
If a collision is detected at time $t_0$ to occur at time $t_c > t_0$ for either $y_d(t_c)$ or $y(t_c; t_0)$, there exists one of the following:
\begin{enumerate}
\item A control law, $\kappa_{avoid}(x)$, which will guarantee obstacle avoidance.
\item Parameters can be quickly found such that $y(t; t_0)$ is collision-free $\forall$ $t \in [t_0, \mbox{ } t_0 + \Delta]$.
\end{enumerate} 
\label{as:collisionAvoid}
\end{assumption}

The dual-mode arc-based MPC algorithm is now stated in Algorithm \ref{alg:MPCAlg}.

\begin{algorithm} 
	\caption{\textbf{Dual-mode Arc-based MPC}}
	\label{alg:MPCAlg}
\begin{enumerate}
\item \label{MPCAlg:initialize} Initialize:
	\begin{enumerate}
		\item Plan path from $y(t_0; t_0)$ to $y_{goal}$ and assign mapping from time to position to form $y_d(t)$.
		\item \label{initialize:switches} Set $\tau_0 = ... = \tau_N = t_0$ .
	\end{enumerate}
\item \label{MPCAlg:collision} If collision is detected along $y_d(t)$ or $y(t;t_0)$ for $t \in [t_0, t_0 + \Delta]$:
	\begin{enumerate}
		\item Cost barriers in terminal cost are dropped.
		\item \label{collision:evasive}  Trivial parameters are set (or $\kappa_{avoid}$ employed).
		\item New parameters are executed until new path has been planned.
		\item \label{collision:remapTraj} Assign mapping to form $y_d(t)$.		
	\end{enumerate}
\item \label{MPCAlg:parameters} Initialize Parameters $\theta$ and $\tau$:
	\begin{enumerate}
		\item \label{initialize:previous} Test previous values of $\theta$ and $\tau$.
		\item \label{initialize:behavior} Test a variety of predefined values for $\theta$ and $\tau$.
		\item \label{initialize:parameters} Choose parameters from steps \ref{initialize:previous} and \ref{initialize:behavior} which result in lowest cost .
	\end{enumerate}
\item \label{MPCAlg:Optimize} Minimize $J(\theta, \tau)$ with respect to $\theta$ and $\tau$, using parameters from \ref{initialize:parameters} as an initialization.
\item Execute control sequence for $\delta_{execute}$ seconds.
\item \label{MPCAlg:repeat} Repeat steps \ref{MPCAlg:collision} through \ref{MPCAlg:repeat} (incrementing $t_0$ by $\delta_{execute}$).
\end{enumerate}
\end{algorithm}

\subsection{Convergence}
\label{ssec:convergence}
We now show that, under Algorithm \ref{alg:MPCAlg}, the robot will converge to the goal location, $y_{goal}$.  An underlying assumption is made that the desired reference trajectory leads to the goal location in finite time, i.e. $\exists \mbox{ } t_g \mbox{ s.t. } y_d(t) = y_{goal}$ $\forall t \geq t_g$.  Convergence is then established by ensuring that Algorithm \ref{alg:MPCAlg} has certain tracking abilities for $t < t_g$ and making dual-mode MPC arguments for $t \geq t_g$.  This section discusses these two aspects of convergence and then gives a theorem about the overall convergence of the algorithm.


\subsubsection{Sufficient Tracking of Reference Trajectory}
The idea of ``sufficient tracking'' is to ensure that after step \ref{MPCAlg:Optimize} of Algorithm \ref{alg:MPCAlg}, the robot plans on getting within $\delta$ of the reference trajectory (i.e. $y(t_0+\Delta;t_0) \in \ball(y_d(t_0+\Delta))$) and collisions are avoided (i.e. $y(t; t_0) \in B_{free}(t_0)$ $\forall t \in [t_0, t_0 + \Delta]$).  To clearly express this concept, allow a solution $(\theta, \tau, x_0)$ at time $t_0$ to consist of the parameters and switch times mentioned in (\ref{eq:cost}) along with an initial condition $x(t_0; t_0) = x_0$.  The idea of ``sufficient tracking'' is encoded through the definition of an admissible solution:
\begin{definition}{(Admissible Solution)}
A solution, $(\tau, \theta, x_0)$, is said to be admissible at time $t_0$ if simulating $\dot{x}(t;t_0) = f\bigl(x(t;t_0),\kappa_i(x(t;t_0), \theta_i)\bigr)$ for $\tau_i \leq t < \tau_{i+1}$ with initial condition $x(t_0; t_0) = x_0$ results in $y(t; t_0) \in B_{free} (t_0)$ $\forall t \in [t_0, t_0 + \Delta]$ and $y(t_0+\Delta; t_0) \in \ball(y_d(t_0+\Delta))$.
\end{definition}

Similar to dual-mode MPC, conditions on both the cost and final control are employed to ensure that step \ref{MPCAlg:Optimize} of Algorithm \ref{alg:MPCAlg} is always initialized with an admissible solution and always results in an admissible solution.  These conditions are given through the following assumptions:

\begin{assumption}{(Collision Barrier)}
The instantaneous cost, $L:\R^n \times \R^M \times B_{free} \mapsto \R$, forms a cost barrier to collisions such that $L(x, \theta, B_{free})\longrightarrow \infty$ as $dist \bigl(h(x), \mathcal{B}_{free}^c \bigr) \overdown 0$, $\forall \theta$, where $dist(y,B)$ denotes the distance from point $y$ to the set $B$.
\label{as:collisionBarrier}
\end{assumption}

\begin{assumption}{(Terminal Barrier)}
The terminal cost, $\Psi:\R^2\times\R^2\mapsto\R$, forms a cost barrier around $y_d(t_0+\Delta)$ such that $\Psi(y, y_d) \longrightarrow \infty$ as $dist \bigl(y, \ball^c(y_d) \bigr) \overdown 0$. 
\label{as:terminalBarrier}
\end{assumption}

\begin{assumption}{(Trajectory Tracking)}
If $y_d(t)$ is collision free $\forall$ $t \geq t_0$ and $y(\tau_N; t_0) \in \ball (y_d(\tau_N))$, then computing $x(t; t_0)$ using the dynamics $\dot{x}(t;t_0) = f\bigl(x, \kappa_N(x(t; t_0), y_d(t))\bigr), \mbox{ } t \geq \tau_N $ will result in $y(t; t_0) \in B_{free}(t_0)$ $\forall t \geq \tau_N$ and $y(t; t_0) \in \ball(y_d(t))$ $\forall t \geq \tau_N+\delta_{execute}$.
\label{as:tracking}
\end{assumption}

Note that in Assumption \ref{as:tracking}, $\delta_{execute}$ can be used to give the final control law sufficient time to converge to a necessary region of the state-space before it is required to have excellent tracking abilities.  Together, these three assumptions allow for a statement of sufficient tracking of the reference trajectory, as given in the following theorem:
\begin{theorem}  \label{th:tracking}
Given Assumptions \ref{as:collisionBarrier}, \ref{as:terminalBarrier}, and \ref{as:tracking}, if Step \ref{MPCAlg:Optimize} of Algorithm \ref{alg:MPCAlg} is initialized with an admissible solution at time $t_s$ and $y_d(t)$ is collision free $\forall t \geq t_s,$ at each future iteration of Algorithm \ref{alg:MPCAlg}, step \ref{MPCAlg:Optimize} will produce parameters $\theta$ and $\tau$ resulting in an admissible solution.

\begin{proof}
Due to Assumptions \ref{as:collisionBarrier} and \ref{as:terminalBarrier}, the minimization of (\ref{eq:cost}) will produce an admissible solution if it is initialized with an admissible solution.  The reason being that an admissible solution will result in a finite cost while an inadmissible solution will result in an infinte cost.  As an admissible solution will result from the optimization step, at the following iteration, step \ref{initialize:previous} of Algorithm \ref{alg:MPCAlg} will produce an admissible initialization to step \ref{MPCAlg:Optimize}.  This can be seen to be the case as, after executing the solution from the previous iteration for $\delta_{execute}$, the robot will be planning on using the final control for the last $\delta_{execute}$ seconds of the new time horizon, resulting in an admissible solution due to Assumption \ref{as:tracking}. 
\end{proof}
\end{theorem}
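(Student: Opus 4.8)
The plan is to argue by induction on the iteration index of Algorithm \ref{alg:MPCAlg}, with the base case supplied directly by the hypothesis that step \ref{MPCAlg:Optimize} is initialized with an admissible solution at time $t_s$. The key structural observation is that the combination of Assumptions \ref{as:collisionBarrier} and \ref{as:terminalBarrier} makes the cost $J(\tau,\theta)$ in (\ref{eq:cost}) a reliable ``admissibility detector'': any candidate $(\tau,\theta)$ whose simulated trajectory either grazes $\mathcal{B}_{free}^c$ on $[t_0, t_0+\Delta]$ or lands with $y(t_0+\Delta; t_0)$ outside $\ball(y_d(t_0+\Delta))$ incurs an infinite (or unboundedly large) cost, whereas an admissible candidate incurs finite cost. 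Hence a descent-type optimization step, when seeded with a finite-cost (admissible) initialization, cannot return something inadmissible — the minimizer, or any improved iterate, stays in the finite-cost region, which is exactly the set of admissible solutions.

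The inductive step then has two halves. First, assuming step \ref{MPCAlg:Optimize} at iteration $k$ (optimization time $t_0$) was initialized admissibly, the paragraph above gives that its output $(\tau^{(k)}, \theta^{(k)})$ is admissible. Second, I need to show this guarantees an admissible initialization at iteration $k+1$ (time $t_0 + \delta_{execute}$), which is what step \ref{initialize:previous} tests. The construction is the standard receding-horizon ``tail'' argument: take the previous solution, advance it by $\delta_{execute}$, and append the terminal reference-tracking controller $\kappa_N$ on the freshly exposed sliver $[t_0+\Delta, \; t_0 + \delta_{execute} + \Delta]$ at the end of the new horizon. Collision-freeness on the overlap $[t_0+\delta_{execute}, t_0+\Delta]$ is inherited from admissibility at iteration $k$; collision-freeness and the terminal ball condition on the new tail follow from Assumption \ref{as:tracking}, whose trigger hypotheses ($y_d$ collision-free for all $t \ge t_0+\delta_{execute}$, which holds since $y_d$ is collision-free for all $t \ge t_s$ by assumption, and $y(\tau_N; t_0+\delta_{execute}) \in \ball(y_d(\tau_N))$ at the shifted switch time) are precisely what admissibility at iteration $k$ plus the $\delta_{execute}$-execution delivers. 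So the shifted-and-appended solution is admissible at $t_0 + \delta_{execute}$, closing the induction.

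The main obstacle I anticipate is bookkeeping the time indices and the shifting of switch times $\tau$ cleanly: after executing for $\delta_{execute}$, the physical state $x(t_0+\delta_{execute}; t_0+\delta_{execute})$ equals the previously-simulated $x(t_0+\delta_{execute}; t_0)$ (no plant/model mismatch is assumed), the switch times must be re-anchored with $\tau_0 = t_0 + \delta_{execute}$ and $\tau_{N+1} = t_0 + \delta_{execute} + \Delta$, and one must check that the terminal controller already occupies at least the last $\delta_{execute}$ of the old horizon so that after shifting it still covers a genuine sub-interval ending at the new $\tau_{N+1}$ — this is where Assumption \ref{as:tracking}'s guarantee ``for all $t \ge \tau_N$'' (an infinite-horizon tracking guarantee) rather than merely on $[\tau_N, \tau_{N+1}]$ is essential. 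A secondary subtlety worth a sentence is that Assumptions \ref{as:collisionBarrier}–\ref{as:terminalBarrier} only give admissibility of whatever the optimizer \emph{returns} given an admissible seed; they do not by themselves guarantee a global minimizer exists, but the argument only needs the returned iterate to have finite (hence admissible) cost, which a monotone optimization routine started from a finite-cost point provides, so I would phrase the conclusion in those terms to match the informal proof already sketched.
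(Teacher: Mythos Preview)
Your proposal is correct and follows essentially the same two-part argument as the paper: (i) the barrier Assumptions \ref{as:collisionBarrier}--\ref{as:terminalBarrier} force any optimizer seeded with an admissible (finite-cost) solution to return an admissible one, and (ii) the receding-horizon ``shift-and-append-$\kappa_N$'' construction, justified by Assumption \ref{as:tracking}, yields an admissible initialization at the next iteration. Your write-up is more explicit about the induction and the time-index bookkeeping than the paper's terse proof, but the underlying ideas are identical.
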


\subsubsection{Convergence to Goal Location}
\label{ssec:dual_mode_conv}
Assuming that $y_d(t) = y_{goal}, \forall t \geq t_g$ and $B_{free}(t)$ is constant $\forall t \geq t_g$, dual-mode MPC can be employed to ensure convergence to $y_{goal}$.  The terminal region can be given as $\ball(y_{goal})$ and  
the reference tracking control, $\kappa_N(x, y_d)$, can be used as the stabilizing controller.  Together with the costs, $\kappa_N(x, y_d)$ can be designed to satisfy B1 through B4 once $y_d(t) = y_{goal}$.  With an additional assumption on the instantaneous cost, a theorem on convergence can then be state:

\begin{assumption}{(Zero Instantaneous Cost)}
The instantaneous cost is zero when $y \in \ball(y_{goal})$ and greater than or equal to zero elsewhere. 
\label{as:zeroInst}
\end{assumption}

\begin{theorem} \label{th:goal_convergence}
 Assuming $y_d(t) = y_{goal}$, $B_{free}(t)$ is constant, and Conditions B1 through B4 are satisfied using $\kappa_f(x) = \kappa_N(x, y_{goal})$, $y(t)$ will converge asymptotically to $y_{goal}$.
\end{theorem}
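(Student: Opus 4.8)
The plan is to observe that, once $y_d(t)\equiv y_{goal}$ and $B_{free}(t)$ is frozen, the optimization in (\ref{eq:cost}) is precisely a dual-mode MPC problem of the type described in Section \ref{ssec:dual_mode_MPC}: the arc controllers $\kappa_0,\dots,\kappa_{N-1}$ together with the reference-tracking controller $\kappa_N(\cdot,y_{goal})$ form the finite-dimensional parametrization of the ``first mode,'' while $\kappa_f(x)=\kappa_N(x,y_{goal})$ is the stabilizing terminal-mode controller. Take the terminal region $X_f$ to be a positively invariant subset of $h^{-1}(\ball(y_{goal}))$. By the Terminal Barrier assumption (Assumption \ref{as:terminalBarrier}) together with Theorem \ref{th:tracking}, every solution returned by Step \ref{MPCAlg:Optimize} is admissible, so $y(t_0+\Delta;t_0)\in\ball(y_{goal})$ and the (soft, barrier-enforced) terminal constraint is met with finite cost at every iteration. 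Hence the optimal value $V(t_0):=J(\tau^\star(t_0),\theta^\star(t_0))$ is well defined and finite along the closed loop.

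First I would establish the standard monotone decrease of $V$. At time $t_0+\delta_{execute}$, build a feasible warm start by time-shifting the previous optimal schedule, discarding the $\delta_{execute}$ seconds already executed (padding with zero-duration arcs, as in Step \ref{initialize:switches}, so the schedule keeps the form of (\ref{eq:cost})), and appending $\kappa_f=\kappa_N(\cdot,y_{goal})$ over the last $\delta_{execute}$ seconds of the advanced horizon. Since the previous solution was admissible, the state lies in $X_f$ at $t_0+\Delta$; condition B3 keeps it in $X_f$, B2 keeps the appended input admissible, and the obstacle/terminal barriers stay finite because $B_{free}$ and $y_d$ are unchanged, so the warm start is admissible. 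Integrating B4 along the appended terminal arc shows the added running cost is dominated by the drop in $\Psi$, so the cost of the warm start is at most $V(t_0)-\int_{t_0}^{t_0+\delta_{execute}}L\,dt$; optimality then yields $V(t_0+\delta_{execute})\le V(t_0)-\int_{t_0}^{t_0+\delta_{execute}}L\,dt$. This is exactly the argument of \cite{Mayne2000}, adapted to the switched arc/tracking parametrization.

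Then I would conclude. By Assumption \ref{as:zeroInst} the running cost is nonnegative and (taking $\Psi\ge 0$) $V\ge 0$, so the non-increasing sequence $V(t_0+k\delta_{execute})$ converges; summing the decrease inequality forces $\int_{t_0+k\delta_{execute}}^{t_0+(k+1)\delta_{execute}}L\,dt\to 0$. Since $L$ is continuous, vanishes exactly on $\ball(y_{goal})$, is positive outside it, and the robot position has bounded velocity, a Barbalat-/uniform-continuity argument gives $dist\bigl(y(t),\ball(y_{goal})\bigr)\to 0$, i.e. the closed loop enters every neighborhood of the terminal set. Inside that regime Assumption \ref{as:tracking} and the invariance B3 keep the trajectory there, and the standard dual-mode stability result of \cite{Mayne2000} --- conditions B1--B4 make $V$ a Lyapunov function whose sublevel sets contract to the origin --- upgrades convergence to $X_f$ into asymptotic convergence of $x$ to the origin, hence of $y(t)=h(x(t))$ to $y_{goal}$.

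The main obstacle is the warm-start/value-decrease step: one must verify that shifting the previous optimal solution and appending the terminal controller really produces an \emph{admissible} solution for the switched arc/tracking parametrization (so the barrier costs, and thus $V$, stay finite), and that the cost bookkeeping with the soft terminal constraint still delivers the clean inequality $V(t_0+\delta_{execute})\le V(t_0)-\int L\,dt$. A secondary subtlety is the passage from ``converges to $\ball(y_{goal})$'' to ``converges to $y_{goal}$,'' which genuinely requires the dual-mode (B1--B4) Lyapunov argument rather than monotonicity of the cost alone, since $L\equiv 0$ on the terminal set.
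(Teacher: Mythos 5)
Your overall strategy is the same as the paper's: treat the optimal value of (\ref{eq:cost}) as a Lyapunov-like function, build a feasible candidate at $t_0+\delta_{execute}$ by shifting the previous solution and letting $\kappa_N(\cdot,y_{goal})$ cover the last $\delta_{execute}$ seconds of the advanced horizon, and use B3/B4 to control the extra terms. That part of your argument is sound and matches the paper. The genuine gap is in how you close the argument. First, you bound the decrease by $-\int L\,dt$ and then argue $\int L\to 0$ forces $y(t)\to\ball(y_{goal})$ because ``$L$ is positive outside the ball.'' Assumption \ref{as:zeroInst} only gives $L\geq 0$ outside $\ball(y_{goal})$, not strict positivity; indeed the concrete cost (\ref{eq:inst_cost_unicycle}) vanishes at any position where $v=v_d$, $\omega=0$ and no obstacles are within $d_{max}$, so vanishing running cost does not by itself localize $y(t)$. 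Second, your ``upgrade'' step --- that B1--B4 make $V$ a Lyapunov function whose sublevel sets contract to the origin, as in the standard dual-mode result of \cite{Mayne2000} --- does not apply here: that result either assumes a positive-definite stage cost or actually switches to the terminal controller inside $X_f$, whereas in Algorithm \ref{alg:MPCAlg} the tracking controller is never applied in closed loop and $L\equiv 0$ on the whole ball, so the running-cost-based decrease of $V$ stalls exactly where you need strict decrease.

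The paper avoids both problems by \emph{not} discarding the terminal-cost difference: with the shifted/maintained solution, the integral terms cancel (using B3, Assumption \ref{as:zeroInst}, and $x(s;t)=x(s;t+\delta_{execute})$), leaving
$\Delta V \leq \Psi\bigl(y(t+\Delta+\delta_{execute}),y_{goal}\bigr)-\Psi\bigl(y(t+\Delta),y_{goal}\bigr)\leq 0$,
which is strictly negative away from $y_{goal}$ because B4 together with $L\geq 0$ makes $\Psi$ strictly decreasing along $\kappa_N$. It then applies a discrete-time LaSalle invariance principle \cite{Hurt1967} to conclude convergence to $\{x\,|\,h(x)=y_{goal}\}$. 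To repair your proof, keep the $\Psi$-difference term as the source of strict decrease (rather than bounding it by zero and relying on $\int L$), and replace the appeal to the Mayne dual-mode theorem with a LaSalle-type argument on the resulting monotone value sequence.
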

\begin{proof}
The development of stability in \cite{Chen1998} can be closely followed to show asymptotic convergence.  Equation (\ref{eq:cost}) can be evaluated as a discrete time candidate Lyapunov function, i.e. $V(t_0) = J(\tau, \theta, t_0)$, where the time indexing is added to denote the cost as a function of time.  
The difference in $V$ after one time step can be written as
\begin{equation}
\begin{split}
\Delta V =& V(t+\delta_{execute}) - V(t) \\
=&\int_{t + \delta_{execute}}^{t+\Delta+\delta_{execute}} L(x(s; t+\delta_{execute}), \theta, B_{free})ds + \Psi(y(t+\Delta+\delta_{execute}, y_{goal}))  \\
-&\int_{t}^{t+\Delta} L(x(s; t), \theta, B_{free})ds - \Psi(y(t+\Delta, y_{goal}) )
\end{split},
\end{equation}
where we have written $B_{free}$ without time indexing to denote that it is constant.  The integral terms can be simplified by examining several of the presented conditions.  Condition B3 states that $\kappa_N$ maintains $y \in \ball(y_{goal})$ if $y$ enters $\ball(y_{goal})$, which is by definition a property of an admissible solution once $y_d(t) = y_{goal}$.  Also, assuming no perturbations, $x(s; t) = x(s; t+\delta)$ $\forall s \geq t+\delta$ when the parameters are maintained the same.  Assumption \ref{as:zeroInst} can then be employed to state that $L(x(s; t+\delta_{execute}), \theta, B_{free}) = 0$ for $s \in [t+\Delta, \mbox{ } t+\Delta+\delta_{execute}]$. This allows the integral terms to mostly cancel out, leaving: 
\begin{equation}
\begin{split}
\Delta V &= \Psi(y(t+\Delta+\delta_{execute}, y_{goal}))  - \int_{t}^{t + \delta_{execute}} L(x(s; t), \theta, B_{free})ds - \Psi(y(t+\Delta, y_{goal}) ) 
\end{split}
\end{equation}
Again employing Assumption \ref{as:zeroInst}, which states that $L \geq 0$, along with Condition B4 which states that $\Psi$ is decreasing when using $\kappa_N$, the result can be simplified further as:
\begin{equation}
\Delta V  \leq \Psi(y(t+\Delta+\delta_{execute}, y_{goal})) -  \Psi(y(t+\Delta, y_{goal}) ) \leq 0
\end{equation}
where strict inequality holds for all $y \neq y_{goal}$.  Then, using a discrete time version of LaSalle's invariance principle, \cite{Hurt1967}, the system will converge to the set $\{ x | h(x) = y_{goal} \}$, showing asymptotic convergence to the goal location.
\end{proof}

\subsubsection{Convergence of the Dual-mode Arc-based MPC Algorithm}
The previous two sections discuss the convergence of Algorithm \ref{alg:MPCAlg} assuming that the reference trajectory is collision free.  One final assumption is made which will allow us to state a theorem on the convergence of the dual-mode arc-based MPC algorithm.

\begin{assumption}{(Known Admissible Solution)}
\label{as:known_solution}
After re-planning $y_d(t)$ in step \ref{collision:remapTraj} of Algorithm \ref{alg:MPCAlg}, an admissible solution can be found.
\end{assumption}
Due to the tracking assumption in Assumption \ref{as:tracking} and the fact that we can use fast path planners, Assumption \ref{as:known_solution} is not limiting.  An admissible solution can almost always be found by setting the switch times to equal $t_0$, as in step \ref{initialize:switches}.  However, we note that with the initialization step discussed in detail in Section \ref{ssec:Optimization}, better parameters can typically be found.  We now state the convergence theorem: 

\begin{theorem}
Given Assumptions \ref{as:collisionAvoid} through \ref{as:known_solution} and that conditions B1 through B4 hold once $y_d(t) = y_{goal}$, executing Algorithm \ref{alg:MPCAlg} will cause $y(t) = y_{goal}$ without collision where asymptotic convergence is achieved once $y_d(t)=y_{goal}$.
\end{theorem}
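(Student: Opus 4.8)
The plan is to stitch together the two convergence results already established, Theorem \ref{th:tracking} and Theorem \ref{th:goal_convergence}, using the collision-handling machinery of Algorithm \ref{alg:MPCAlg} to bridge the gaps between re-planning events. First I would observe that the time axis splits into two phases: the ``tracking phase'' $t < t_g$ during which $y_d(t)$ is still driving the robot toward the goal, and the ``terminal phase'' $t \geq t_g$ during which $y_d(t) = y_{goal}$ and $B_{free}(t)$ is constant. The bulk of the argument concerns showing that the tracking phase is well-behaved and terminates with the robot inside $\ball(y_{goal})$, after which Theorem \ref{th:goal_convergence} applies directly.

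For the tracking phase, I would argue inductively over the re-planning epochs. By Assumption \ref{as:known_solution}, each time $y_d(t)$ is re-planned in step \ref{collision:remapTraj}, an admissible solution is available to initialize step \ref{MPCAlg:Optimize}; by Assumption \ref{as:collisionAvoid}, between the detection of a collision and the completion of re-planning, steps \ref{MPCAlg:collision}(b)--(c) keep the robot collision-free (either via $\kappa_{avoid}$ or via quickly-found trivial parameters). Then Theorem \ref{th:tracking} guarantees that, from the first admissible initialization onward and as long as $y_d(t)$ remains collision-free, every subsequent iteration of step \ref{MPCAlg:Optimize} produces an admissible solution, so $y(t; t_0) \in B_{free}(t_0)$ on each horizon and $y(t_0+\Delta; t_0) \in \ball(y_d(t_0+\Delta))$. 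The key consequence to extract is that once the robot is tracking within $\delta$ and $\tau_N$ has advanced to the start of the horizon (which happens after at most $\lceil \Delta/\delta_{execute}\rceil$ iterations, since executing the previous admissible solution for $\delta_{execute}$ shrinks the arc-portion of the horizon), Assumption \ref{as:tracking} takes over and the reference-tracking controller $\kappa_N$ keeps $y(t)$ within $\ball(y_d(t))$ for all subsequent time, in particular for all $t \geq t_g + \delta_{execute}$. Since $y_d(t_g) = y_{goal}$, this places the robot inside $\ball(y_{goal})$ by time $t_g + \delta_{execute}$, collision-free throughout because $y_{goal}$ and the final leg of $y_d$ are collision-free (it was re-planned as such) and Assumption \ref{as:tracking} preserves $B_{free}$ membership.

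For the terminal phase, once $y(t) \in \ball(y_{goal})$ and $y_d(t) = y_{goal}$ with $B_{free}$ constant, Theorem \ref{th:goal_convergence} applies verbatim: conditions B1--B4 hold with $\kappa_f = \kappa_N(\cdot, y_{goal})$ by hypothesis, Assumption \ref{as:zeroInst} holds, and the discrete-time Lyapunov argument forces $y(t) \to y_{goal}$ asymptotically. Combining the two phases gives $y(t) \to y_{goal}$ without collision, with asymptotic convergence kicking in once $y_d(t) = y_{goal}$, as claimed.

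The main obstacle I anticipate is the \emph{hand-off} between the two phases --- precisely, certifying that the robot is genuinely inside $\ball(y_{goal})$ (and not merely inside $\ball(y_d(t))$ for a still-moving $y_d$) by the time the dual-mode argument must take over, and that no late-arriving obstacle forces a re-plan after $t_g$. The first point requires using Assumption \ref{as:tracking}'s two-part conclusion carefully: the $\delta_{execute}$ grace period is exactly what guarantees $y(t) \in \ball(y_d(t))$ for $t \geq \tau_N + \delta_{execute}$, so one must check that $\tau_N$ has caught up to within $\delta_{execute}$ of $t_g$ before invoking it. The second point is where the theorem's standing hypothesis that $B_{free}(t)$ is eventually constant (stated in the setup of Section \ref{ssec:dual_mode_conv}) does the real work --- without it the collision-handling loop could fire indefinitely and there would be no clean reduction to Theorem \ref{th:goal_convergence}. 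Everything else is bookkeeping over the finitely many re-planning epochs in $[t_s, t_g]$.
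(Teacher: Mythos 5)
Your proposal is correct and follows essentially the same route as the paper: both stitch Theorem \ref{th:tracking} (admissibility maintained across re-planning epochs, via Assumptions \ref{as:collisionAvoid} and \ref{as:known_solution}) to Theorem \ref{th:goal_convergence} once $y_d(t)=y_{goal}$, with your write-up simply being more explicit about the hand-off at $t_g$. The only notable difference is how indefinite re-planning is ruled out --- the paper appeals to completeness of the path planner (eventually a collision-free path is found if one exists), whereas you lean on the standing hypothesis that $B_{free}$ is eventually constant; this is a minor variation, not a different argument.
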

\begin{proof}
Assumption \ref{as:collisionAvoid} ensures that if a collision is detected due to a previously unseen obstacle, the robot can replan a path while avoiding any obstacles.  Due to Assumption \ref{as:known_solution}, Theorem \ref{th:tracking} will then hold true for the new trajectory, guiding the robot towards the goal location until either $y_d(t) = y_{goal}$ or another collision is detected.  If $y_d(t) = y_{goal}$, then Theorem \ref{th:goal_convergence} can be applied.  If a collision is detected then the process is repeated.  If the path planner is complete, eventually the robot will explore the environment enough such that a collision free path is found, if it exists.
\end{proof}

\subsection{Behavior-based Optimization}
\label{ssec:Optimization}
In \cite{Droge2012a}, gradients were derived for the behavior-based approach.  However, it is important to note that the cost to be minimized is subject to local minima caused by both the environment and the switched-time optimization.  To avoid getting stuck in undesirable local minima, a behavior-based approach capitalizing on arc-based motion primitives can be used to find a good initialization for the gradient-based optimization.  This behavior-based approach for initialization is also useful as a technique for avoiding obstacles in step \ref{collision:evasive} of Algorithm \ref{alg:MPCAlg}.


When introducing the DWA algorithm, it was noted in \cite{Fox1997} that a single arc was used due to the complexity of searching the parameter space when more than one arc is used.  While a grid-based discretized search of the parameter space, as done in \cite{Fox1997, Stachniss2002, Brock1999}, would lead to a significant increase in computation when adding additional arcs, arc-based motion can be exploited to reduce the search space.  To exploit the arc-based motion, an analytic solution to the unicycle model in (\ref{eq:unicycle}) is given in the following Theorem: 
\begin{theorem} \label{th:ConstantVelocities}
Assuming $v$ and $\omega$ are constant, the solution to (\ref{eq:unicycle}) can be expressed as 
\begin{align*}
x_1(t) &= \frac{v}{\omega} \Bigl( \sin(\psi(t)) - \sin(\psi_{30}) \Bigr) + x_{10}  &x_1(t) &= \cos(\psi_{30}) v t + x_{10} \\
x_2(t) &= \frac{v}{\omega} \Bigl( \cos(\psi_{30}) - \cos(\psi(t)) \Bigr) + x_{20} &x_2(t) &= \sin(\psi_{30}) v t + x_{20}\\
\psi(t) &= \omega t + \psi_{30} & \psi(t) &= \psi_{30}
\end{align*}
for $\omega \neq 0$ and for $\omega = 0$ on the left and right respectively.
\end{theorem}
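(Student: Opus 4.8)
The plan is to solve the three scalar ordinary differential equations in (\ref{eq:unicycle}) in sequence, exploiting the fact that the $\psi$ equation decouples from the position equations. Since $\omega$ is constant, $\dot{\psi} = \omega$ integrates immediately, subject to $\psi(0) = \psi_{30}$, to give $\psi(t) = \omega t + \psi_{30}$ when $\omega \neq 0$ and $\psi(t) = \psi_{30}$ when $\omega = 0$. With $\psi(t)$ known, the position equations become the explicit quadratures $\dot{x}_1 = v\cos(\psi(t))$ and $\dot{x}_2 = v\sin(\psi(t))$.

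For $\omega \neq 0$, substituting $\psi(t) = \omega t + \psi_{30}$ gives $\dot{x}_1 = v\cos(\omega t + \psi_{30})$; integrating and choosing the constant of integration so that $x_1(0) = x_{10}$ yields $x_1(t) = \frac{v}{\omega}\bigl(\sin(\omega t + \psi_{30}) - \sin(\psi_{30})\bigr) + x_{10}$, and rewriting $\omega t + \psi_{30}$ as $\psi(t)$ produces the claimed expression. The computation for $x_2$ is identical up to a sign, giving $x_2(t) = \frac{v}{\omega}\bigl(\cos(\psi_{30}) - \cos(\psi(t))\bigr) + x_{20}$. For $\omega = 0$, $\psi(t) \equiv \psi_{30}$ makes $\dot{x}_1$ and $\dot{x}_2$ constant, so integrating gives the straight-line solution $x_1(t) = \cos(\psi_{30})vt + x_{10}$ and $x_2(t) = \sin(\psi_{30})vt + x_{20}$.

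To finish, I would verify that the stated expressions actually satisfy (\ref{eq:unicycle}) by differentiating each component and checking that the right-hand side is reproduced, and confirm that they match the initial conditions at $t = 0$. That these are the \emph{only} solutions then follows from standard existence and uniqueness results for ordinary differential equations, since for fixed constant $v$ and $\omega$ the right-hand side of (\ref{eq:unicycle}) is smooth, hence locally Lipschitz, in the state — the same regularity already assumed of the dynamics elsewhere in the paper.

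There is essentially no obstacle here: the result is a direct quadrature, and the only place for care is bookkeeping the constants of integration so that $(x_{10}, x_{20}, \psi_{30})$ enter correctly, together with the sign appearing in the $x_2$ integral. If desired, one could additionally observe that the two cases are consistent as $\omega \to 0$ — a Taylor expansion of $\sin(\omega t + \psi_{30})$ about $\omega = 0$ shows $\frac{v}{\omega}\bigl(\sin(\psi(t)) - \sin(\psi_{30})\bigr) \to \cos(\psi_{30})vt$, and similarly for $x_2$ — but this continuity remark is not needed for the statement as given.
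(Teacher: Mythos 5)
Your proof is correct and follows essentially the same route as the paper: the paper simply differentiates the stated expressions, checks that they reproduce the unicycle model, and invokes uniqueness of solutions, which is exactly the verification-plus-uniqueness step you include. Your additional forward derivation by direct quadrature and the $\omega \to 0$ consistency remark are fine but not needed beyond what the paper's argument already contains.
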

\begin{proof}
The proof follows from uniqueness of solutions to differential equations.  Taking the time-derivative of the given equations for $\omega \neq 0$ and $\omega = 0$, respectively, will result in the original unicycle model. 
\end{proof}
Theorem \ref{th:ConstantVelocities} can be used to quickly compute collision free trajectories when a collision has been detected.  This is detailed in the following theorem:
\begin{theorem} \label{th:scaling_arcs}
Assuming the following:
\begin{enumerate}
	\item The unicycle model for the dynamics in (\ref{eq:unicycle}) is used with initial condition $x(0) = x_0$.
	\item The state trajectory, $x(t)$ is formed using velocities $(v_i, \omega_i)$ for $\tau_i \leq t < \tau_{i+1}$, where, without loss of generality, $\tau_0 = 0$.
	\item The time of first collision is given by $t_c > 0$.
\end{enumerate}
If a scaled trajectory, $x_s(t)$ with $x_s(0) = x_0$, is found by using the velocities $(s v_i, s \omega_i)$ for $\frac{\tau_i}{s} < t < \frac{\tau_{i+1}}{s}$, where $s = \alpha \frac{t_c}{\Delta}$ and $0 < \alpha < 1$.  The following holds true:
\begin{enumerate}
	\item $x_s(t) = x(s t)$
	\item $x_s(t)$ is collision free for $t \in [0, \mbox{ } \Delta]$	
\end{enumerate}
\end{theorem}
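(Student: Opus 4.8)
The plan is to prove the two claims in sequence, with the first doing most of the work and the second following as a geometric consequence. For claim 1, the idea is a time-rescaling argument: I would define $x_s(t) := x(st)$ and verify directly that this rescaled curve satisfies the dynamics generated by the scaled velocities $(sv_i, s\omega_i)$ on the rescaled switching intervals $[\tau_i/s, \tau_{i+1}/s)$. Concretely, differentiating $x_s(t) = x(st)$ via the chain rule gives $\dot{x}_s(t) = s\,\dot{x}(st)$, and plugging in the unicycle dynamics (\ref{eq:unicycle}) at time $st$ — which on the interval $\tau_i \le st < \tau_{i+1}$, i.e. $\tau_i/s \le t < \tau_{i+1}/s$, uses velocities $(v_i,\omega_i)$ — yields $\dot{x}_{s,1}(t) = s v_i \cos(\psi(st)) = s v_i \cos(\psi_s(t))$, and similarly for the other two components, where $\psi_s(t) := \psi(st)$ is the orientation component of $x_s$. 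This is exactly the unicycle model driven by $(sv_i, s\omega_i)$. Since $x_s(0) = x(0) = x_0$ matches the prescribed initial condition, and the piecewise-constant-velocity ODE has unique solutions on each interval (invoking Theorem \ref{th:ConstantVelocities} and continuity at the switch times), $x_s$ is \emph{the} trajectory generated by the scaled velocities, establishing $x_s(t) = x(st)$.

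For claim 2, the observation is that as $t$ ranges over $[0,\Delta]$, the argument $st$ ranges over $[0, s\Delta]$. Since $s = \alpha\, t_c/\Delta$ with $0 < \alpha < 1$, we get $s\Delta = \alpha t_c < t_c$. By hypothesis the time of first collision of the original trajectory $x(\cdot)$ is $t_c$, so $h(x(r))$ is collision-free for all $r \in [0, t_c)$, hence in particular for all $r \in [0, \alpha t_c] \subset [0, t_c)$. Therefore $h(x_s(t)) = h(x(st))$ is collision-free for every $t \in [0,\Delta]$, which is the claim. (Here I am implicitly using that the position output $h$ depends only on the position coordinates $(x_1,x_2)$, so the rescaled-time trajectory visits exactly the same set of spatial points, just on a compressed schedule.)

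I do not anticipate a genuine \emph{obstacle}, but the step requiring the most care is the first one: making sure the rescaled curve really is admissible as a solution of the scaled dynamical system, including checking that the switch times line up correctly (the $i$-th switch of the original system at $\tau_i$ becomes a switch of the scaled system at $\tau_i/s$, consistent with the interval endpoints stated in the theorem) and that continuity of the state is preserved across switches so that uniqueness of solutions applies on the whole interval rather than just piecewise. A secondary subtlety worth a sentence is the boundary case $t = t_c$: since $\alpha < 1$ strictly, $s\Delta = \alpha t_c$ is strictly less than $t_c$, so we never reach the collision instant even at the right endpoint $t = \Delta$, and the closed interval $[0,\Delta]$ causes no trouble. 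I would also note in passing that the theorem as stated does not constrain $v_i, \omega_i$ to be nonzero, so both cases of Theorem \ref{th:ConstantVelocities} are covered uniformly by the chain-rule computation, which never divides by $\omega$.
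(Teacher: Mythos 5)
Your proof is correct, and its overall structure matches the paper's: the key identity $x_s(t) = x(st)$, followed by the observation that $s\Delta = \alpha t_c < t_c$ so the scaled trajectory only revisits pre-collision points of the original one. Where you differ is in how the identity is established. The paper applies the closed-form arc solutions of Theorem \ref{th:ConstantVelocities} separately to the original and scaled velocity sequences on each interval and checks algebraically that the two expressions coincide under $t \mapsto st$; you instead define the candidate $x_s(t) := x(st)$, differentiate via the chain rule, and invoke uniqueness of solutions of the piecewise-constant-velocity ODE (with continuity across the switch times) to conclude it is the trajectory generated by $(sv_i, s\omega_i)$ on $[\tau_i/s, \tau_{i+1}/s)$. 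Your route is slightly more economical and more general: it never uses the explicit formulas, handles the $\omega_i = 0$ and $\omega_i \neq 0$ cases uniformly without division by $\omega$, and would extend verbatim to any driftless model in which scaling all inputs by $s$ scales $\dot{x}$ by $s$. The paper's algebraic verification, on the other hand, is consistent with how Theorem \ref{th:ConstantVelocities} is actually used computationally in the behavior-based initialization (the explicit formulas are needed there anyway), so the two proofs trade a small amount of generality for direct reuse of machinery already in place. Your handling of the endpoint case ($\alpha < 1$ strictly, so $t = \Delta$ maps to $\alpha t_c < t_c$) is exactly the paper's remark, made slightly more explicit.
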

%
\begin{proof}
A proof can be obtained by applying Theorem \ref{th:ConstantVelocities} on each interval for the original and scaled variables, and seeing algebraically that the resulting scaled trajectory, $x_s(t) = x(s t)$.  Note that if $t_c$ is the time of the first collision then $x(s \Delta) = x(\alpha t_c)$ will be a point on the original trajectory before the collision as $\alpha < 1$.    Thus, $x_s(t)$ will be collision free $\forall$ $t \in [0 \mbox{ } \Delta]$.  
\end{proof}

Applying Theorem \ref{th:scaling_arcs}, a robot with unicycle dynamics can quickly avoid collisions by slowing down appropriately.  This is seen through the illustrations shown in Figure \ref{fig:Scaling}.  It shows the trajectories resulting from the arcs before and after scaling as well as the results of scaling on the cost contour map.  It illustrates that each velocity pair that produces a collision can immediately be mapped into the collision free region by scaling.  It is important to note that this method is effectively used on robots with more complicated dynamic models in Sections \ref{sec:Magellan} and \ref{sec:Segway} as they use controllers designed to quickly converge to constant velocities.

\begin{figure}[!b]
\centering
\begin{minipage}{0.68\linewidth}
\includegraphics[width=\columnwidth]{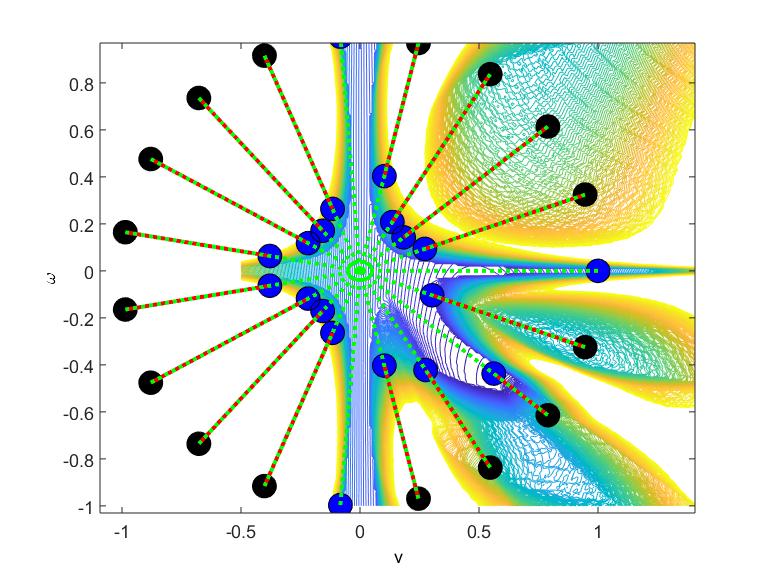}
\end{minipage}
\begin{minipage}{0.29\linewidth}
$
\begin{array}{c}
\includegraphics[width=.9\columnwidth]{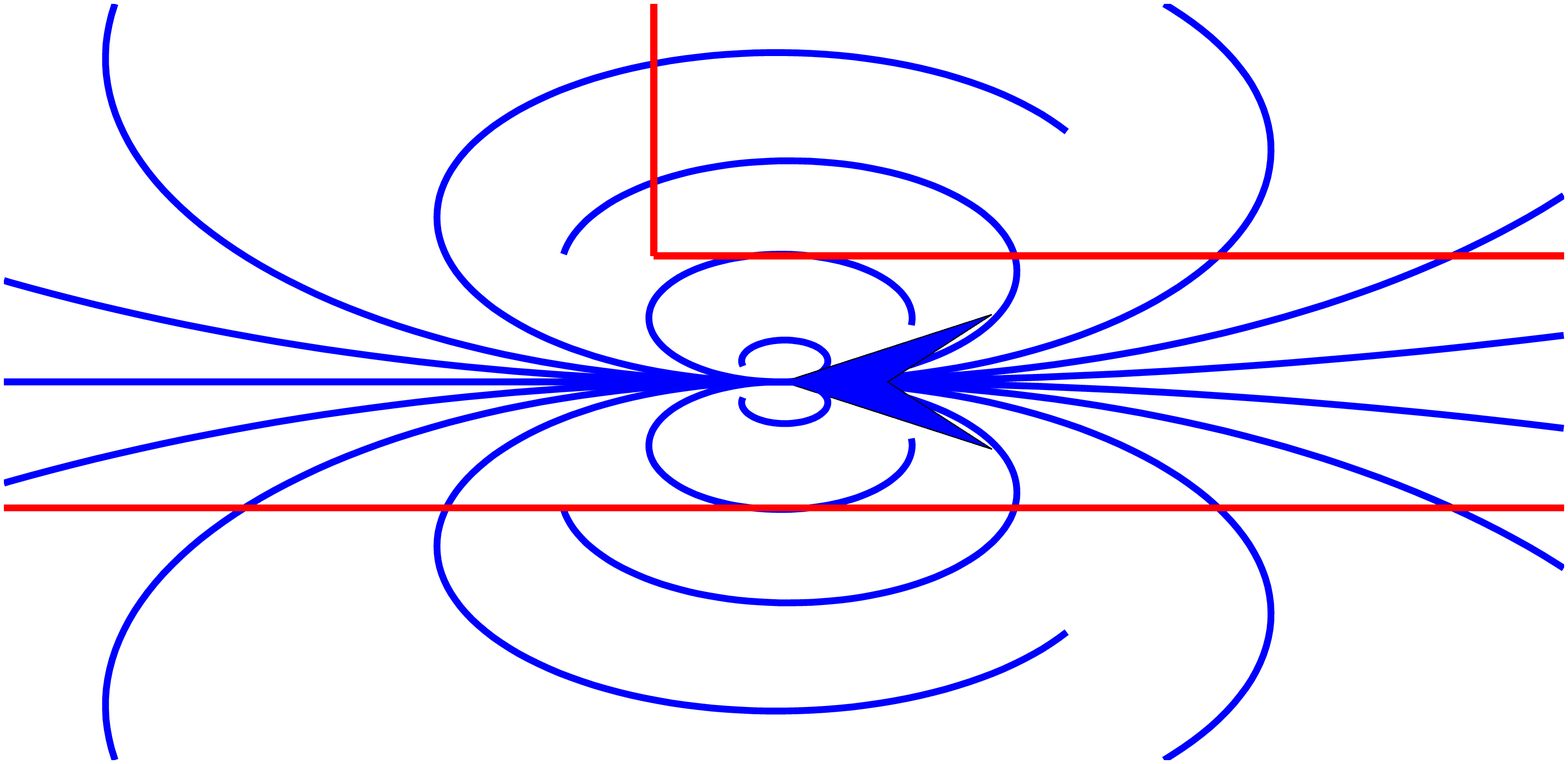} \\
\includegraphics[width=.9\columnwidth]{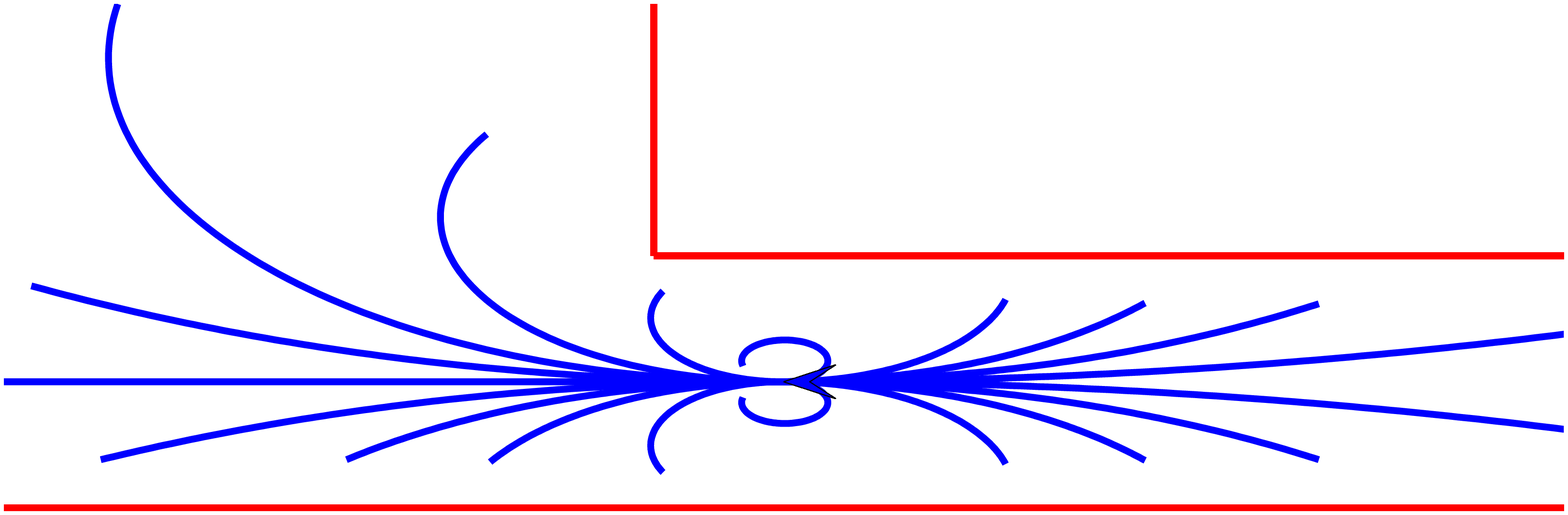} 
\end{array}$
\end{minipage}
\caption{On the left is shown a contour plot of a cost when choosing a single $(v,\omega)$ pair.  Shown on the contour plot are 20 points tested along a circle of radius 1 with a line connecting each point to a collision free pair of velocities executing the same arc, with a different speed.  On the top-right is shown the initial arcs before scaling and on the bottom-right is shown the arcs after scaling with $\alpha = .8$.  The straight lines denote walls.} 
\label{fig:Scaling}
\end{figure}

The concept of scaling can be employed to perform a quick search of the parameter space using a behavior-based approach.   Multiple arcs can be placed back-to-back to create a desired behavior.  Several examples of such behaviors that we found to be useful are illustrated in Figure \ref{fig:behaviors}.  The illustrated examples consist of the following behaviors:
\begin{itemize}
\item Single-arc behavior: all arcs are given the same velocities and $\tau_N = \tau_{N+1}$.
\item Arc-then-reference behavior: same as single-arc except $\tau_N < \tau_{N+1}$ .
\item Turn behavior: $\omega_1$, $\tau_1$, and $\tau_2$ can be designed using Theorem \ref{th:ConstantVelocities} to make a ninety-degree turn.  The remained $\omega_i$ values are set to zero.
\item Point-to-point behavior: using solutions to Theorem \ref{th:ConstantVelocities}, arcs can be computed to navigate between any two points.  
\end{itemize}
A quick search of the parameter space can then be accomplished by testing various iterations of each behavior.  For the first two behaviors, a number of arcs with different curvature can be tested (we found 20 to work quite well).  For the turn behavior, both clockwise and counter-clockwise turns can be tested as well as different lengths for the first, straight mode (we found 5 lengths to work well).  The arc and turning behaviors are illustrated further in Extension 1.  Finally, the point-to-point behavior can be used by extracting desired points from the reference trajectory.  As the behaviors are based on arcs, scaling can be used to quickly present a collision free trajectory as an option for initialization of the optimization step.

\begin{figure*}[!t]
\centering
$
\begin{array}{cccc}
\includegraphics[trim=4cm 4cm 4cm 4cm, clip=true, width=.22\columnwidth]{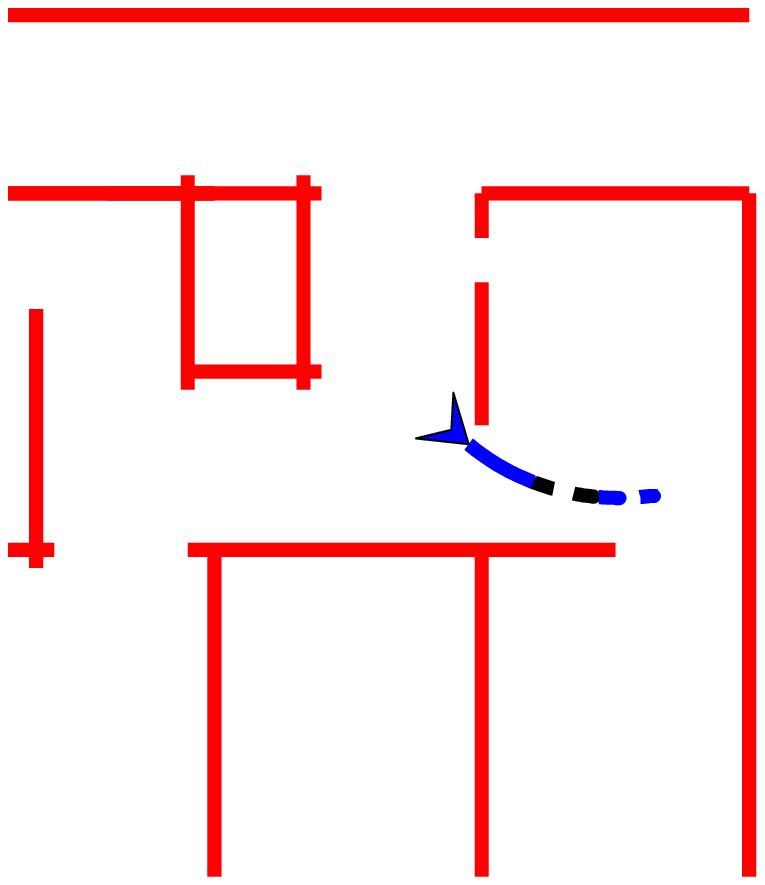} &
\includegraphics[trim=4cm 4cm 4cm 4cm, clip=true, width=.22\columnwidth]{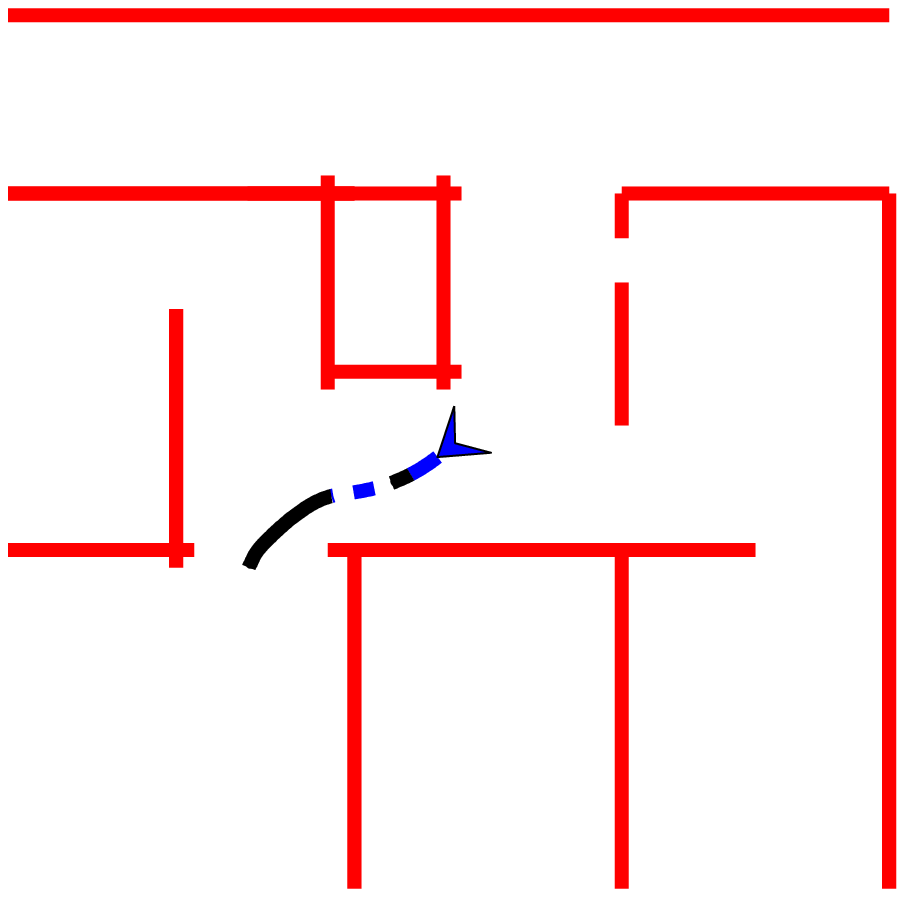} &
\includegraphics[trim=6cm 3cm 2cm 5cm, clip=true, width=.22\columnwidth]{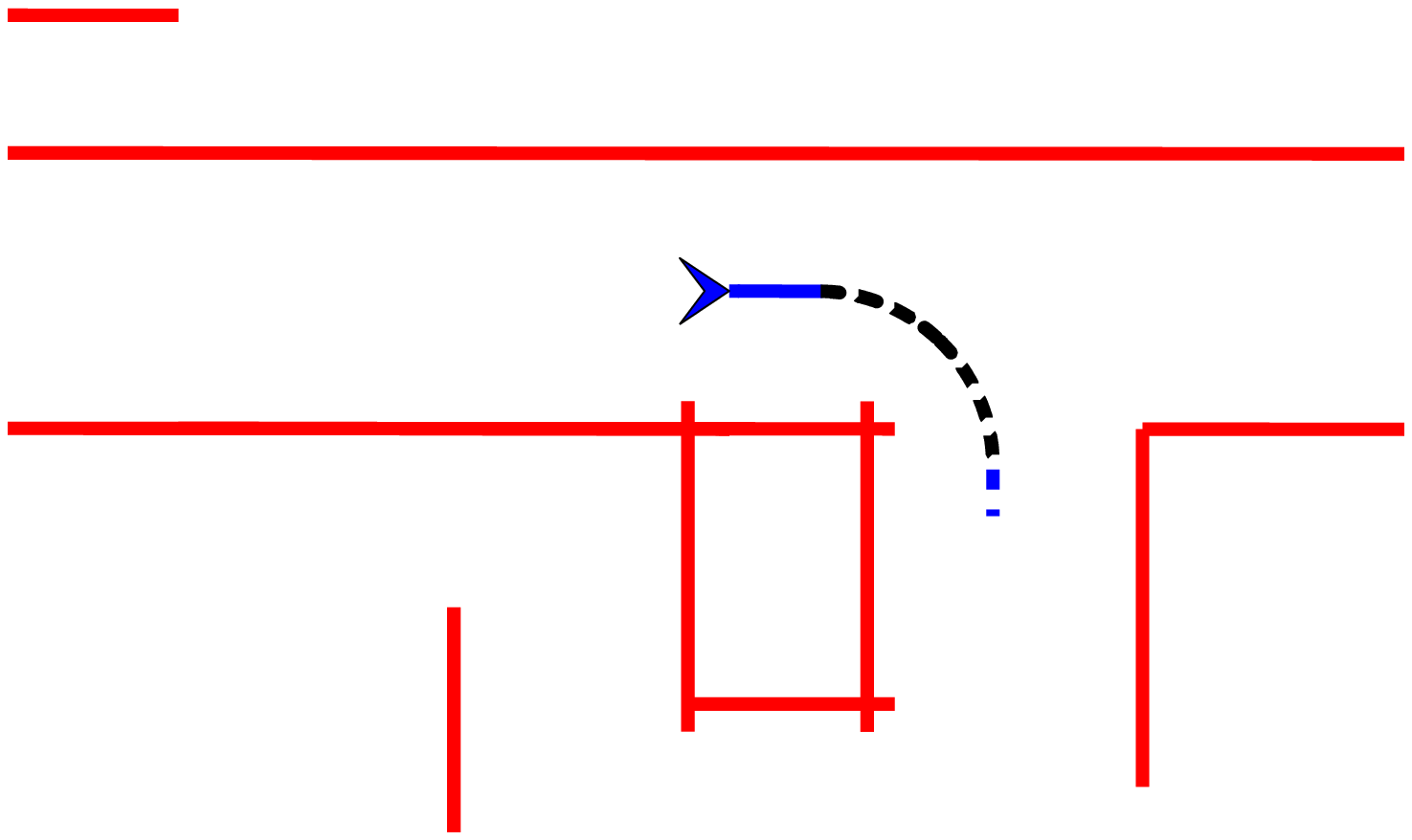} &
\includegraphics[trim=4cm 4cm 4cm 4cm, clip=true, width=.22\columnwidth]{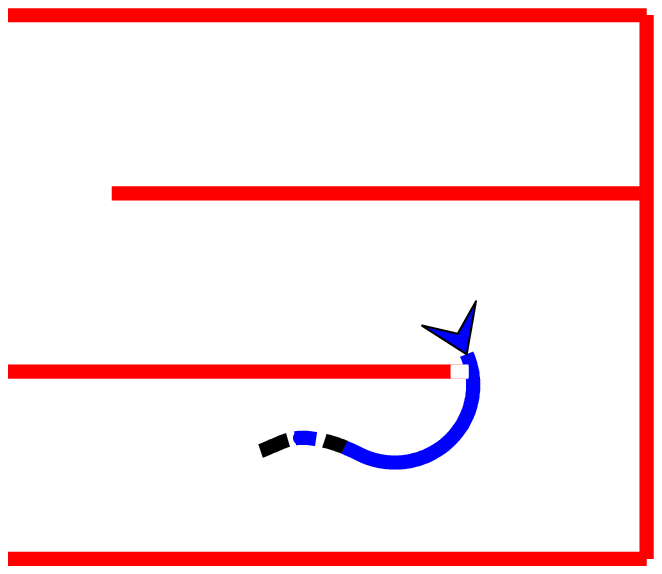} 
\end{array}$
\caption{This figure demonstrates possible behaviors based on arc-based control.  From left to right is shown a single arc, arc then reference follower, turning, and point-to-point behaviors.  In each image the robot is represented as a triangle, with the resulting trajectory extending from it.  The arcs are differentiated through color and line style, with the final reference tracking mode denoted as a solid line at the end of the trajectory (only visible in the arc-then-reference behavior). The straight lines denote walls. } 
\label{fig:behaviors}
\end{figure*}

\section{Control and Costs for Unicycle Motion Model}
\label{sec:Unicycle}
The unicycle model for motion in (\ref{eq:unicycle}) provides a basic example where considering the nonholonomic constraints of a mobile robot becomes important.  Simply planning in the position space is not sufficient as a robot with such a model is not capable of moving orthogonal to its direction of motion.  It is also a useful model to examine as control laws can be designed for robots with more complicated dynamics which, after transients die out, cause the robot to move as if its dynamic model were the unicycle model.  Results are not given in this section, but a reference trajectory tracking control law and costs are developed which are then used as a basis for the control and costs developed in Sections \ref{sec:Magellan} and \ref{sec:Segway}.

\begin{figure}[!b]
\centering
\includegraphics[width=.25\columnwidth]{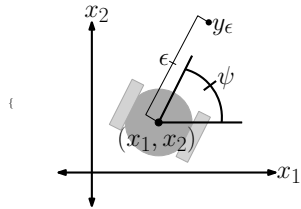}
\caption{Shown is a diagram of a two-wheeled robot with the $\epsilon$-point to be controlled.}
\label{fig:Models}
\end{figure}

\subsection{Reference Following Control}
\label{ssec:UnicycleTerminalControl}
To define a valid control law to be used as the final mode, it is necessary that it satisfy the conditions imposed in Section \ref{sec:dual_mode_MPC}.  These conditions are the tracking condition in Assumption \ref{as:tracking} as well as the dual-Mode MPC conditions in B3 and B4.  Condition B1 is satisfied by choice of goal location, and B2 is trivially satisfied as no constraints have been put on the inputs.  

To define a reference following control law satisfying these conditions, the concept of approximate $\epsilon$-diffeomorphism presented in \cite{Olfati2002} is utilized.  The idea being that, while the center of the robot has a nonholonomic constraint, a point directly in front of the robot does not.  So, instead of controlling the actual position of the robot, the final control law will control 
$$
y = \begin{bmatrix}
x_1 \\
x_2
\end{bmatrix}  + \epsilon 
\begin{bmatrix}
\cos(\psi) \\
\sin(\psi)
\end{bmatrix},
$$
where $\epsilon>0$ is some pre-defined parameter as shown in Figure \ref{fig:Models}.  Note that to simplify notation, we drop the time indexing on state, input, and reference variables throughout the remainder of the paper, except when required for sake of clarity.

A simplification of the development in \cite{Olfati2002} (i.e. controlling velocities instead of accelerations) leads to $\dot{y} = u_\epsilon$, where the inputs of the unicycle system can be solved for algebraically as
\begin{equation}
	\begin{bmatrix}
	v \\ \omega
	\end{bmatrix} =
	\begin{bmatrix}
	1 & 0 \\
	0 & \frac{1}{\epsilon}
	\end{bmatrix} 
	\begin{bmatrix}
	\cos(\psi) & \sin(\psi) \\
	-\sin(\psi) & \cos(\psi)
	\end{bmatrix} u_\epsilon.
	\label{eq:diffeomorphism}
\end{equation}
To control the unicycle to track a reference trajectory, the following controller can be used: 
\begin{equation}
\label{eq:diff_Control}
	u_\epsilon(t) = \kappa_{N_\epsilon}(y(t), y_d(t)) = \dot{y}_{d} + k_p(y_{d} - y)
\end{equation}
where $k_p > 0$ is constant and $\kappa_N(x, y_d)$ is given by combining (\ref{eq:diffeomorphism}) and (\ref{eq:diff_Control}) to obtain the commanded velocities.  A theorem is now given on the satisfaction of the tracking condition of the control law.

\begin{theorem} \label{th:diff_control_unicycle}
The control law in (\ref{eq:diff_Control}) satisfies Assumption \ref{as:tracking}.

\begin{proof}
The proof is accomplished by showing that $V(y(t)) = \frac{1}{2} || y_{d}(t) - y(t)||^2$ is a Lyapunov function.  As such, $\dot{V}$ will be shown to always be decreasing.  Therefore, if $\kappa_N(x(t), y_d(t))$ is executed starting at time $\tau_N$ and $y(\tau_N) \in \ball(y_{d}(\tau_N))$, then $y(t) \in \ball(y_{d}(t))$ $\forall t \geq \tau_N$ as the distance between $ y_{d}(t)$ and $y(t)$ will always be decreasing as time increases.

$V(t)$ can be seen to be decreasing by evaluating the time derivative:
\begin{equation}
\dot{V} 
	= (y_{d} - y)^T ( \dot{y}_{d} - \dot{y} ) = 
	(y_{d} - y)^T ( \dot{y}_{d} - u_\epsilon ) 
	= -k_p(y_{d} - y)^T (y_{d} - y)  
	< 0 \mbox{ } \forall y_{d} \neq y	
\end{equation}
\end{proof}
\end{theorem}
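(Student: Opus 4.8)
The plan is to reduce the closed loop to a linear error system and then run a short Lyapunov argument, exactly the two ingredients Assumption~\ref{as:tracking} calls for ($\delta$-tracking and collision-freeness). First I would exploit the $\epsilon$-diffeomorphism: substituting (\ref{eq:diffeomorphism}) into the unicycle model (\ref{eq:unicycle}) makes the controlled point obey the single integrator $\dot{y} = u_\epsilon$, and with the controller (\ref{eq:diff_Control}) this becomes $\dot{y} = \dot{y}_d + k_p(y_d - y)$. Introducing the tracking error $e(t) := y_d(t) - y(t)$, one gets the autonomous linear error dynamics $\dot{e} = \dot{y}_d - \dot{y} = -k_p e$. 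This structural fact is the crux; everything else is a consequence.

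Second, I would take $V(t) = \tfrac{1}{2}\|y_d(t) - y(t)\|^2$ as a candidate Lyapunov function for the error system and compute $\dot{V} = e^T\dot{e} = -k_p\|e\|^2$, which is strictly negative whenever $y \neq y_d$. Hence $\|e(t)\|$ is monotonically non-increasing along the closed loop (in fact $\|e(t)\| = e^{-k_p(t-\tau_N)}\|e(\tau_N)\|$). If the final mode is switched on at $\tau_N$ with $y(\tau_N;t_0) \in \ball(y_d(\tau_N))$, i.e.\ $\|e(\tau_N)\| < \delta$, then $\|e(t)\| \le \|e(\tau_N)\| < \delta$ for all $t \ge \tau_N$, so $y(t;t_0) \in \ball(y_d(t))$ for every $t \ge \tau_N$. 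This is in fact stronger than the $t \ge \tau_N + \delta_{execute}$ required by Assumption~\ref{as:tracking}; the $\delta_{execute}$ slack is unnecessary here and is only carried for the heavier dynamic models treated in the later sections.

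Finally I would dispatch the collision-free conclusion: since $y(t;t_0)$ stays within $\delta$ of the collision-free reference $y_d(t)$ for all $t \ge \tau_N$, the trajectory remains in $B_{free}(t_0)$. I expect this last step to be the one needing care, because strictly it relies on the convention that the reference is produced with at least $\delta$ clearance, equivalently that the $\delta$-tube around a collision-free reference lies in $B_{free}(t_0)$ — a modelling assumption on the path planner/free-space definition rather than something forced by the error dynamics. Granting that, the $\delta$-tracking and collision-free properties together are precisely the conclusion of Assumption~\ref{as:tracking}, completing the proof. The Lyapunov computation itself is routine; the only genuine subtlety is making the reference-clearance hypothesis explicit.
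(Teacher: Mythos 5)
Your proposal is correct and takes essentially the same route as the paper: the identical Lyapunov function $V = \tfrac{1}{2}\|y_d - y\|^2$ on the $\epsilon$-point error, the identical computation $\dot{V} = -k_p\|y_d - y\|^2 < 0$ via $\dot{y} = u_\epsilon$, and the same conclusion that $\ball(y_d(t))$ is never exited once entered at $\tau_N$. The only difference is that you explicitly handle the collision-free clause of Assumption \ref{as:tracking} by naming the needed $\delta$-clearance hypothesis on the reference, a point the paper's proof leaves implicit; this is a slightly more careful reading, not a different method.
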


To show that the control law will satisfy condition B3 and B4, an assumption on the terminal cost is made and a theorem is stated.

\begin{assumption}{(Terminal Cost Convexity)}
The terminal cost is strictly convex in $y$ with a unique minimum located at $y = y_{goal}$.
\label{as:termCostConvexity}
\end{assumption}


\begin{theorem}
Given Assumptions \ref{as:zeroInst} and \ref{as:termCostConvexity}, $\Psi(y(t), y_{goal})$ satisfies B3 and B4 once $y_d(t) = y_{goal}$ when using the feedback control law $\kappa_{N}$.

\begin{proof}
In the terminal region, $y_{d}(t) = y_{goal}$ is constant which allows (\ref{eq:diff_Control}) to be reduced to $\kappa_{N_\epsilon}(y(t)) = k_p(y_{goal} - y)$.  The proof of Theorem \ref{th:diff_control_unicycle} shows that $\ball(y_{goal})$ is invariant under $\kappa_{N_\epsilon}$, satisfying B3.  To evaluate B4, $\Psi(y(t))$ needs to be evaluated as a candidate control-Lyapunov function under the control $\kappa_{N_\epsilon}$.  The time derivative can be evaluated as 
\begin{equation}
\dot{\Psi} 
	= \frac{\partial \Psi}{\partial y}(y) \dot{y} 
	= k_p\frac{\partial \Psi}{\partial y}(y)(y_{d} - y) 
	< k_p( \Psi(y) + \frac{\partial \Psi}{\partial y}(y_{d} - y) )
\end{equation}
as $\Psi(y) \geq 0  $ with strict inequality $\forall y \neq y_{d}$ due to Assumption \ref{as:termCostConvexity}.  Using the global underestimator property of convex functions, we can further state:
\begin{equation}
\begin{split}
	\dot{\Psi} 
	&< k_p( \Psi(y_{d})  ) = 0
\end{split}
\end{equation}
Therefore, $\dot{\Psi} < 0$ $\forall y \neq y_{d}$ and B4 is satisfied.
\end{proof}
\end{theorem}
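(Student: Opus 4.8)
The plan is to verify conditions B3 and B4 in turn, in each case exploiting the fact that once $y_d(t) \equiv y_{goal}$ the reference-tracking law collapses: since $\dot{y}_d = 0$, (\ref{eq:diff_Control}) becomes the pure proportional regulator $\kappa_{N_\epsilon}(y) = k_p(y_{goal} - y)$, which through the $\epsilon$-point relation (\ref{eq:diffeomorphism}) yields the closed-loop motion $\dot{y} = u_\epsilon = k_p(y_{goal} - y)$ of the controlled coordinate. Everything then reduces to an analysis of this planar $y$-dynamics.

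For B3, I would reuse the Lyapunov computation behind Theorem \ref{th:diff_control_unicycle}: with $V(y) = \frac{1}{2}\|y_{goal} - y\|^2$ one gets $\dot{V} = -k_p\|y_{goal} - y\|^2 \leq 0$, so the distance from $y$ to $y_{goal}$ is non-increasing along closed-loop trajectories. Hence any trajectory that starts with its $\epsilon$-point in $\ball(y_{goal})$ keeps it there, which is exactly positive invariance of the terminal region $X_f$ (the set of states whose $\epsilon$-point lies in $\ball(y_{goal})$) under $\dot{x} = f(x, \kappa_N(x, y_{goal}))$. This gives B3.

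For B4, after shifting coordinates so the origin coincides with $y_{goal}$, the required inequality reads $\frac{\partial \Psi}{\partial y}(y)\,\dot{y} + L\big|_{\kappa_N} < 0$ for every $y \in \ball(y_{goal})$ with $y \neq y_{goal}$. Assumption \ref{as:zeroInst} removes the running-cost term, since $L = 0$ throughout $\ball(y_{goal})$, so it remains to show $k_p\,\frac{\partial \Psi}{\partial y}(y)(y_{goal} - y) < 0$. I would obtain this from the global-underestimator inequality for the strictly convex $\Psi$: for $y \neq y_{goal}$, $\Psi(y_{goal}) > \Psi(y) + \frac{\partial \Psi}{\partial y}(y)(y_{goal} - y)$, i.e. $\frac{\partial \Psi}{\partial y}(y)(y_{goal} - y) < \Psi(y_{goal}) - \Psi(y)$. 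Because $y_{goal}$ is the unique minimizer of $\Psi$ (Assumption \ref{as:termCostConvexity}), the right-hand side is $\leq 0$, so the inner product is strictly negative, $\dot{\Psi} < 0$, and B4 holds.

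The step needing the most care — and what I would flag as the main obstacle — is the strictness bookkeeping in B4: strict convexity (not mere convexity) is what makes the underestimator inequality strict, and the fact that the minimum of $\Psi$ sits exactly at $y_{goal}$ is what makes $\Psi(y_{goal}) - \Psi(y) \leq 0$; only the combination of these two turns a non-strict first-order condition into the strict decrease B4 demands. A secondary, mostly cosmetic point is that B3 and B4 are stated for the full state $x$ and dynamics $f$, whereas $\Psi$ and the controlled motion depend on $x$ only through $y = h(x)$ and $\dot{y} = u_\epsilon$; one should note that the untracked coordinates (the heading $\psi$, and any velocity states appearing in the richer models of later sections) play no role in terminal-region membership and impose no extra conditions, so the reduction to the planar $y$-dynamics is legitimate.
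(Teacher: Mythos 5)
Your proposal is correct and follows essentially the same route as the paper: the same reduction of the tracking law to $k_p(y_{goal}-y)$ once $y_d \equiv y_{goal}$, the same reuse of the Lyapunov argument from Theorem \ref{th:diff_control_unicycle} for B3, and the same first-order (global-underestimator) convexity inequality for B4. If anything your B4 bookkeeping is slightly tidier---you apply the underestimator directly to the gradient term, needing only that $y_{goal}$ minimizes $\Psi$ (rather than the paper's additional use of $\Psi(y)>0$ and $\Psi(y_{goal})=0$), and you make explicit how Assumption \ref{as:zeroInst} cancels the running-cost term in B4, which the paper leaves implicit.
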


\subsection{Cost Definition}
\label{ssec:UnicycleCosts}
To define valid costs for the dual-mode arc-based MPC algorithm, the costs must satisfy several conditions in terms of the $\epsilon$-point, $y$.  The terminal cost must satisfy two conditions:
\begin{enumerate}
\item Assumption \ref{as:terminalBarrier} gives a condition of a barrier around the reference position.
\item Assumption \ref{as:termCostConvexity} and B4 give a condition on the convexity.
\end{enumerate}
Similarly, the instantaneous costs must satisfy two conditions:
\begin{enumerate}
\item Assumption \ref{as:collisionBarrier} requires the instantaneous cost to form a barrier to collisions.
\item Assumption \ref{as:zeroInst} requires the instantaneous cost to approach zero as $y \longrightarrow \ball (y_{goal})$.
\end{enumerate}

The terminal cost is defined as:
\begin{equation}
	\Psi(x(t), y_d(t)) = \frac{\rho_4}{2}||y(t) - y_{d}(t)||^2 + \rho_5(-\log(\frac{dist(y(t), \ball(y_{d}(t))}{\delta} )  ).
	\label{eq:term_cost_unicycle}
\end{equation}
As both terms in  $\Psi(x(t), y_d(t))$ are strictly convex with minimum at $y = y_{d}$, Assumptions \ref{as:termCostConvexity} and B4 are satisfied.  Moreover, Assumption \ref{as:terminalBarrier} is satisfied as $\Psi$ is defined as a log-barrier.  Note that while the first term seems to be redundant, it is used to help find parameters in step \ref{MPCAlg:collision} of Algorithm \ref{alg:MPCAlg} when the terminal barrier cost is removed.

The instantaneous cost is used to ensure obstacle avoidance while moving at a desirable speed, $v_d$.  To represent the obstacles we use a grid-based map where $N_{obs}$ is the number of occupied grid points within a radius of $d_{max}$ of the robot.  It is assumed that the occupied grid points are available at time $t$ as $o_1(t), ..., o_{N_{obs}}(t)$, where $o_i(t) \in \R^2$.  This allows the cost to be written as 
\begin{equation}
\begin{split}
L(x(t), \theta_i, B_{free}(t_0)) =&  L_{goal}(||y(t)-y_{goal_\epsilon}||)  \bigl( \frac{\rho_1}{2} (v_d - v(t))^2 + \frac{\rho_2}{2}\omega(t)^2 \bigr) \\
&+ \frac{\rho_3}{2} \sum_i^{N_{obs}} L_{avoid}( || y - o_i(t) || )
\end{split}
\label{eq:inst_cost_unicycle}
\end{equation}
where
$$
	L_{goal}(d) = \begin{cases}
	\frac{2}{1+e^{-a(d - \delta)}} - 1 & d \geq \delta \\
	0 & d < \delta
	\end{cases}	
$$
$$
	L_{avoid}(d) = \begin{cases}
	-a \log(d - d_{min}) + a \log(d_{max}-d_{min})  & d_{min} < d \leq d_{max} \\
	\infty & d \leq d_{min} \\
	0 & d > d_{max} 	 
	\end{cases}
$$
$\rho_i > 0$ is a weight, and $d_{min}$ is the minimum allowed distance from an occupied grid cell to the robot.

The term $L_{goal}$ is provided to smoothly reduce the influence of the instantaneous cost around the goal to satisfy Assumption \ref{as:zeroInst} for the first two terms in the cost.  Note, to completely satisfy these two assumptions, the goal position needs to be defined such that the goal is at least $d_{max} + \delta$ from the nearest obstacle so $L_{avoid}$ is also zero in the terminal region.  By inspection, the log-barrier cost $L_{avoid}$ satisfies Assumption \ref{as:collisionBarrier}.

\section{Case Study: Magellan Robot}
\label{sec:Magellan}
The unicycle motion model and control presented in Section \ref{sec:Unicycle} can be used when the convergence time to the desired velocities is negligible.  However, it is often the case that dynamic limitations cannot be ignored, especially as the speed of the robot increases.  The Irobot Magellan-Pro, shown in Figure \ref{fig:mag_shot}, presents an excellent example of such a scenario.  We have observed the maximum translational acceleration to be approximately $.1 \frac{m}{s}$, which becomes a significant factor in avoiding obstacles when traveling near the maximum velocity of $1 \frac{m}{s}$.

\begin{figure*}[!b]
\centering
\includegraphics[width=.22\columnwidth]{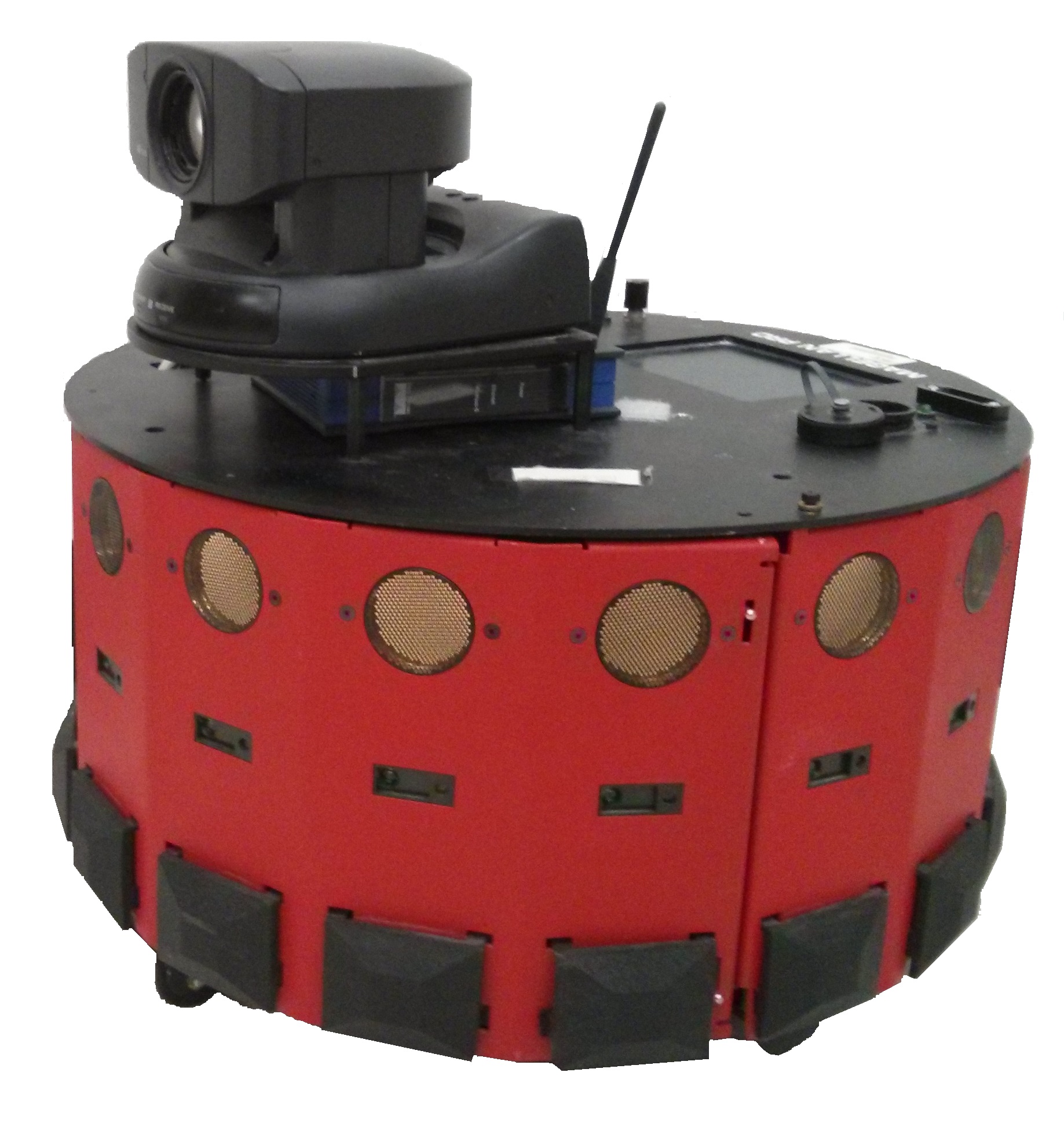} 
\caption{Shown is a picture of the IRobot Magellan-pro.} 
\label{fig:mag_shot}
\end{figure*}

The Magellan is also a prime candidate to demonstrate the dual-mode arc-based MPC algorithm due to its relatively limited computational capabilities.  It has a Pentium III 850 Mhz processor, which is limited when compared to the state of the art.  To develop the dual-mode arc-based MPC algorithm for the Magellan, this section begins with an explanation of the dynamic model to be used, develops a control law for reference tracking, and ends with experimental results.

\subsection{First Order Model}
\label{ssec:Mag_model}
While the kinematic constraints of the Magellan can be written as the unicycle model in (\ref{eq:unicycle}), the velocities cannot be instantaneously realized.  The Magellan-pro robot has an on-board control which accepts velocity commands and limits accelerations.  Thus, similar to \cite{Krajnik2011}, which modeled a UAV with a similar on-board control, we utilize a first order model on the velocity inputs written as follows:
\begin{equation}
	f(x) = 
	\begin{bmatrix}
	\dot{x}_1 \\
	\dot{x}_2 \\
	\dot{\psi} \\
	\dot{v} \\
	\dot{\omega}
	\end{bmatrix} =
	\begin{bmatrix}
	v \cos(\psi) \\ 
	v \sin(\psi) \\
	\omega \\
	a_1 v + b_1 u_1 \\
	a_2 \omega + b_2 u_2
	\end{bmatrix},
\label{eq:Magellan}
\end{equation}
$$
\mbox{s.t. } |\dot{v}| \leq a_{max}, |\dot{\omega}| \leq \alpha_{max}
$$
where $a_i, b_i, a_{max}$, and $\alpha_{max}$ are constants. 

\subsection{Reference Following Control}
\label{ssec:Mag_ctrl}
To use the dual-mode arc-based MPC algorithm, two control laws must be defined:  a control law to regulate the system to desired, constant velocities, and a controller to be used for reference tracking.  The control law used to converge to desired, constant velocities is formed using an LQ approach, e.g. \cite{Bryson1975}.  As this is a highly developed control methodology, the details are not presented.  Rather, we focus our attention on the reference tracking control as it must satisfy the conditions imposed in Section \ref{sec:dual_mode_MPC}.  These conditions are given as the tracking condition in Assumption \ref{as:tracking} as well as the dual-Mode MPC conditions given in B2, B3, and B4.  Again, condition B1 is satisfied by choice of goal location.  

To form a control law, we first ignore the acceleration constraints and control the velocities and then examine the control law to see when it can be employed.  Let $e_{\bar{v}}$ be the error between $\bar{v} = [v, \omega]^T$ and the desired velocities, denoted as $\bar{v}_d$.  Let the control be defined as follows
\begin{equation}
	u_{\bar{v}} = \begin{bmatrix}
	u_1 & u_2
\end{bmatrix}^T =	 B^{-1} \bigl( \dot{\bar{v}}_d - A \bar{v}_d \bigr) - k_v e_{\bar{v}} ,
\label{eq:MagellanCtrl1}
\end{equation} 
\begin{equation}
	\bar{v}_d = \begin{bmatrix}
	1 & 0 \\
	0 & \frac{1}{\epsilon} \\
	\end{bmatrix}
	\begin{bmatrix}
	\cos(\psi) & \sin(\psi) \\
	-\sin(\psi) & \cos(\psi)
	\end{bmatrix}
	u_\epsilon,
\label{eq:MagellanCtrl2}
\end{equation}
$$
A = \begin{bmatrix}
a_1 & 0 \\
0 & a_2 
\end{bmatrix}
, B = \begin{bmatrix}
b_1 & 0 \\
0 & b_2 
\end{bmatrix},
$$
\begin{equation}
 	\dot{\bar{v}}_d = \begin{bmatrix}
 	-\omega \sin(\psi) & \omega \cos(\psi) \\
 	-\frac{\omega \cos(\psi)}{\epsilon} &  	-\frac{\omega \sin(\psi)}{\epsilon}
 	\end{bmatrix} u_\epsilon + \begin{bmatrix}
 	\cos(\psi) & \sin(\psi) \\
	-\frac{\sin(\psi)}{\epsilon} & \frac{\cos(\psi)}{\epsilon}
 	\end{bmatrix} \dot{u}_\epsilon,
 	\label{eq:Mag_vbar}
\end{equation}
where $u_\epsilon$ is defined as in (\ref{eq:diff_Control}), and
$$
	\dot{u}_\epsilon = \ddot{x}_{d_\epsilon} + k(\dot{x}_{d_\epsilon} - y).
$$

The time derivative of $e_{\bar{v}}$ can then be written as $\dot{e}_{\bar{v}} = (A - B k_v)e_{\bar{v}} = \tilde{A}e_{\bar{v}}$ where $k_v$ is a matrix of gains used for feedback.  Thus, the error in velocities is exponentially stable to zero.  As the desired velocities converge, the system behaves as the unicycle model of Section \ref{sec:Unicycle}.  To ensure that the required conditions are met, two additional assumptions are made:

\begin{assumption}
\label{as:limitedAcc_RefTraj}
	The desired reference trajectory accelerations must be bounded below the acceleration limitations of the Magellan-pro robot.
\end{assumption}
\begin{assumption}
\label{as:proximit_vel_Mag}
	The velocities at time $\tau_N$ are sufficiently close to the desired velocities to allow convergence within $\delta_{execute}$ seconds .
\end{assumption}

Assumption \ref{as:limitedAcc_RefTraj} corresponds to a condition that the time mapping must satisfy the translational acceleration constraint and a condition on the curvature of the path must satisfy the rotational acceleration constraint.  Assumption \ref{as:proximit_vel_Mag} presents a relationship between the velocities at time $\tau_N$, $\delta_{execute}$, and the convergence rate of the control, directly affected by choice of $k_v$ in (\ref{eq:MagellanCtrl1}).  Assumption \ref{as:proximit_vel_Mag} could be guaranteed through the introduction of additional cost barriers on the velocities.  In practice, we have seen this to be unnecessary by choosing $v_d = ||\dot{y}_{d}(\tau_N)||$ for the final arc in the initialization step discussed in Section \ref{ssec:Optimization} and noting that $\omega$ converges quite quickly.


\subsection{Results}
\label{ssec:Mag_results}
To present the results for the implementation of the dual-mode arc-based MPC algorithm on the Magellan Pro robot, we perform a mix of simulation and actual implementation, where the desired velocity was set at .9 $\frac{m}{s}$, 90\% of the Magellan's top speed. A simulated environment was projected onto the floor and the robot navigated through the environment as shown in Figure \ref{fig:Mag_environement} and Extension 1.  To allow for high speeds, the environment continuously switched between two predefined environments.   Along with a simulated environment, a simulated laser scanner is used as a sensor.  The laser scanner takes 100 measurements ranging from $\psi - \frac{\pi}{2}$ to $\psi + \frac{\pi}{2}$.  The robot is interfaced to a mapping algorithm through the ROS navigation package, using an $A^*$ planning algorithm to find desired paths and the dual-mode arc-based MPC algorithm to follow the planned path.

To perform optimization, the techniques from Section \ref{ssec:Optimization} were employed.  The NLopt optimization library, \cite{NLOPT2}, together with the gradients in \cite{Droge2012a}, were used to perform the gradient-based portion of the optimization.  To test the efficacy of the different portions of the algorithm, several trials were run.  First, the tracking control was run by itself, then the arc-based algorithm was performed without the gradient descent, and finally the gradient descent was incorporated and the remaining trials consisted of varying the alloted time for gradient descent (a parameter available in the NLopt library).  

It is important to note that the tracking controller resulted in multiple collisions when running near the maximum velocity of the Magellan.  This is quite possibly due to modeling errors, which are amplified at the top speeds.  This was not a hindrance to the other trials as the tracking controller was never actually executed, due to the behavior-based initialization step to the optimization.  Thus, for a fair comparison, the results for pure tracking control were obtained from simulation.  

\begin{figure*}[!t]
\centering
$
\begin{array}{cccc}
\includegraphics[width=.22\columnwidth]{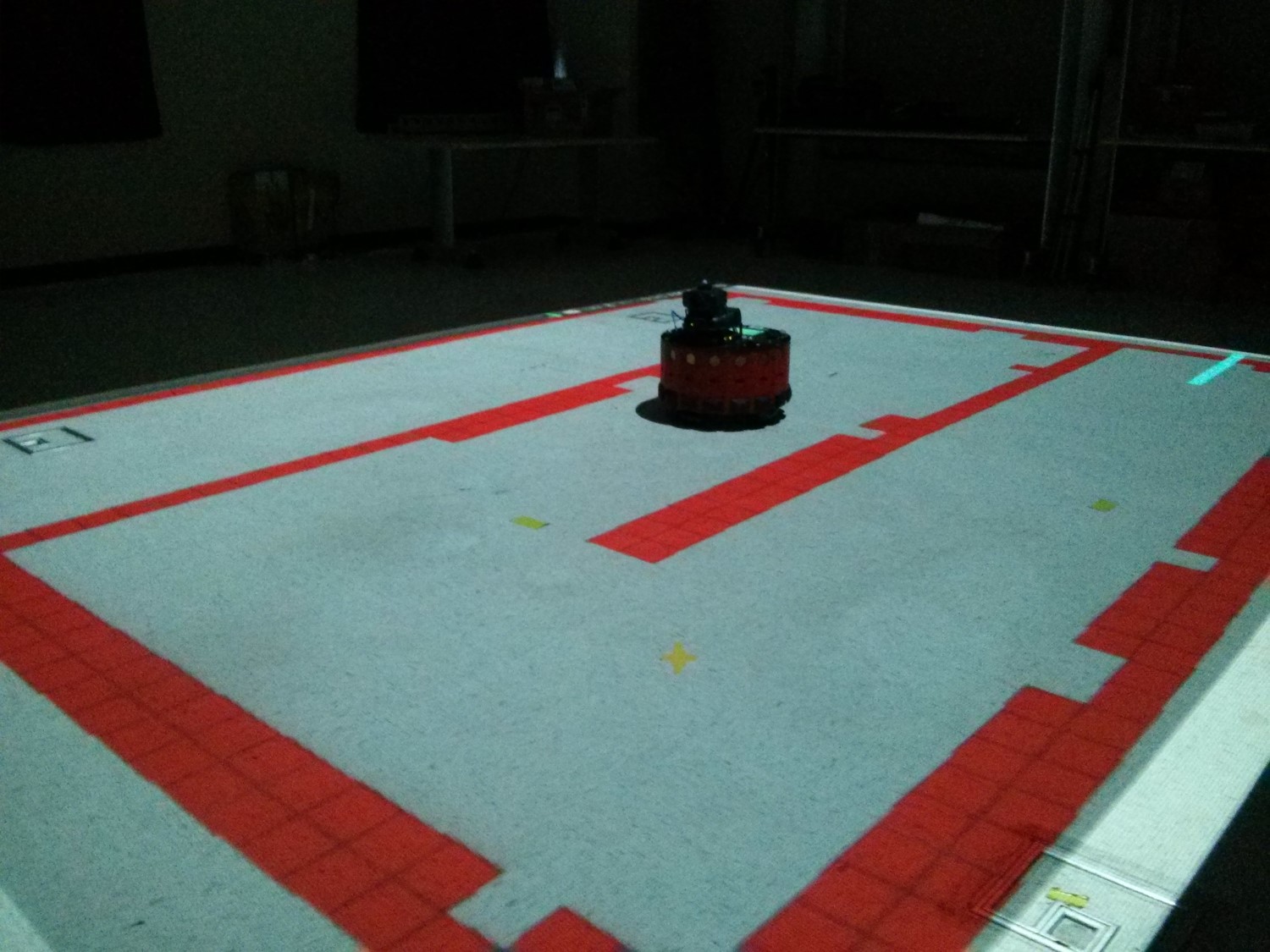} &  	
\includegraphics[width=.22\columnwidth]{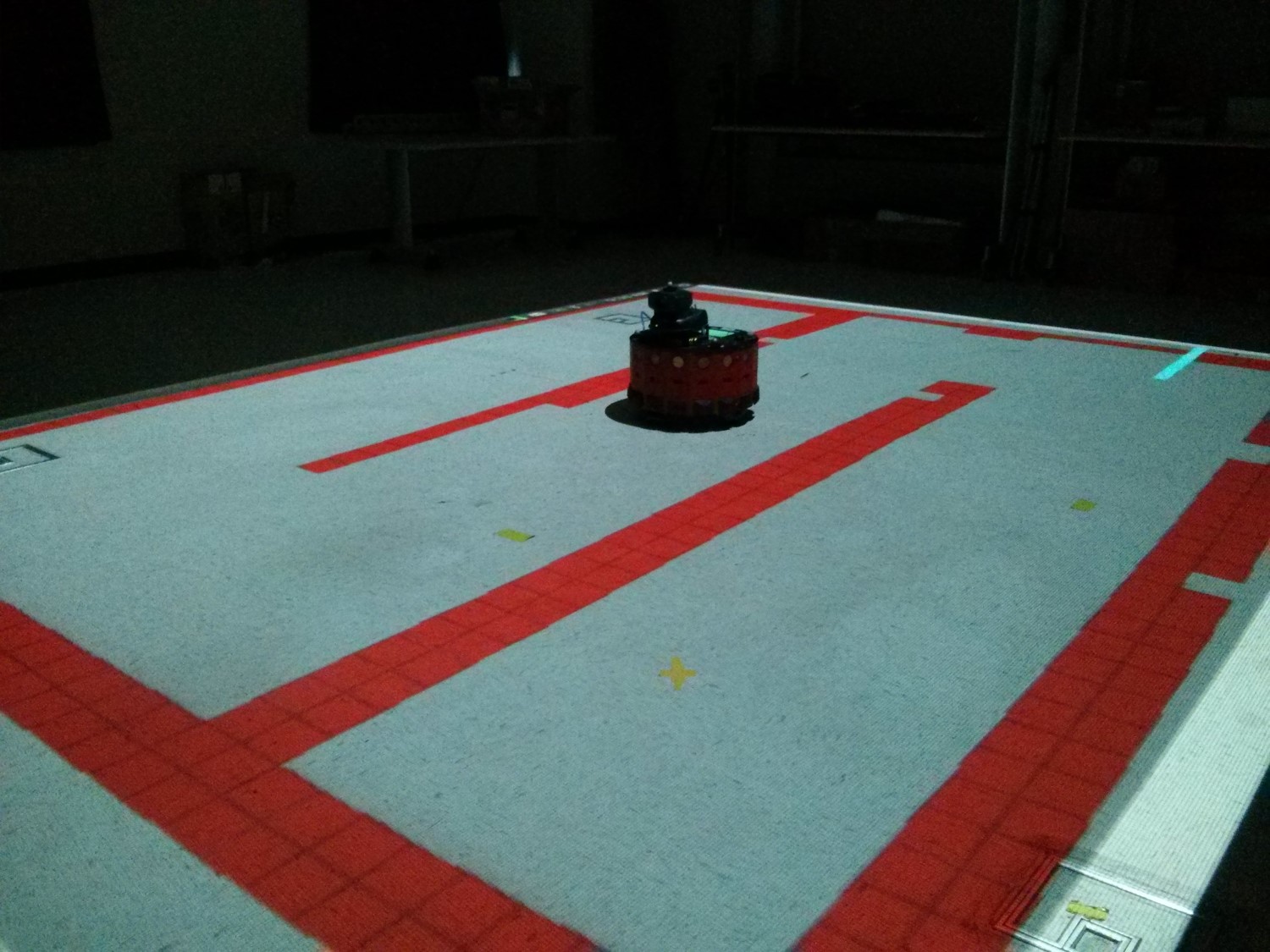} &
\includegraphics[width=.22\columnwidth]{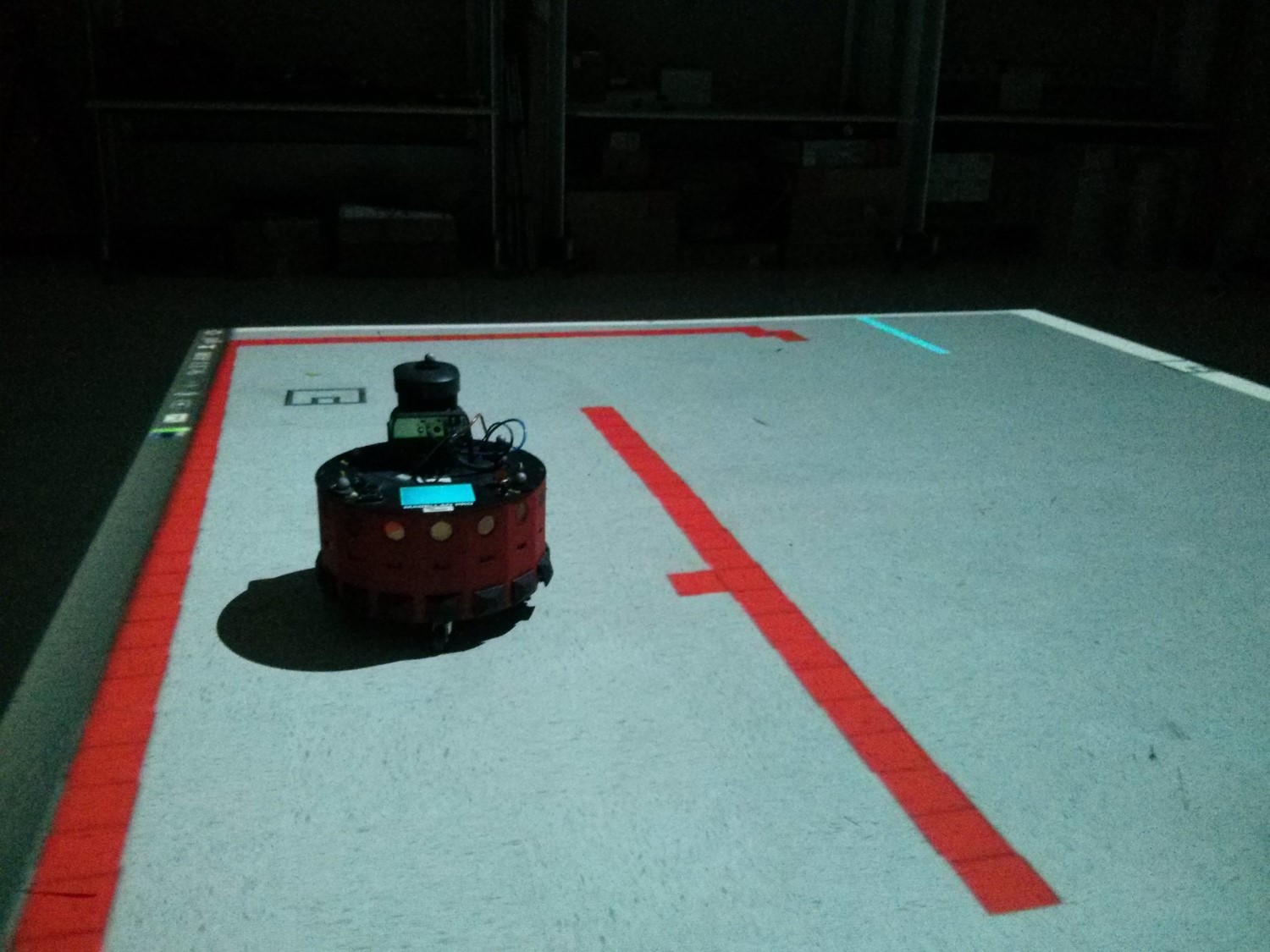} &
\includegraphics[width=.22\columnwidth]{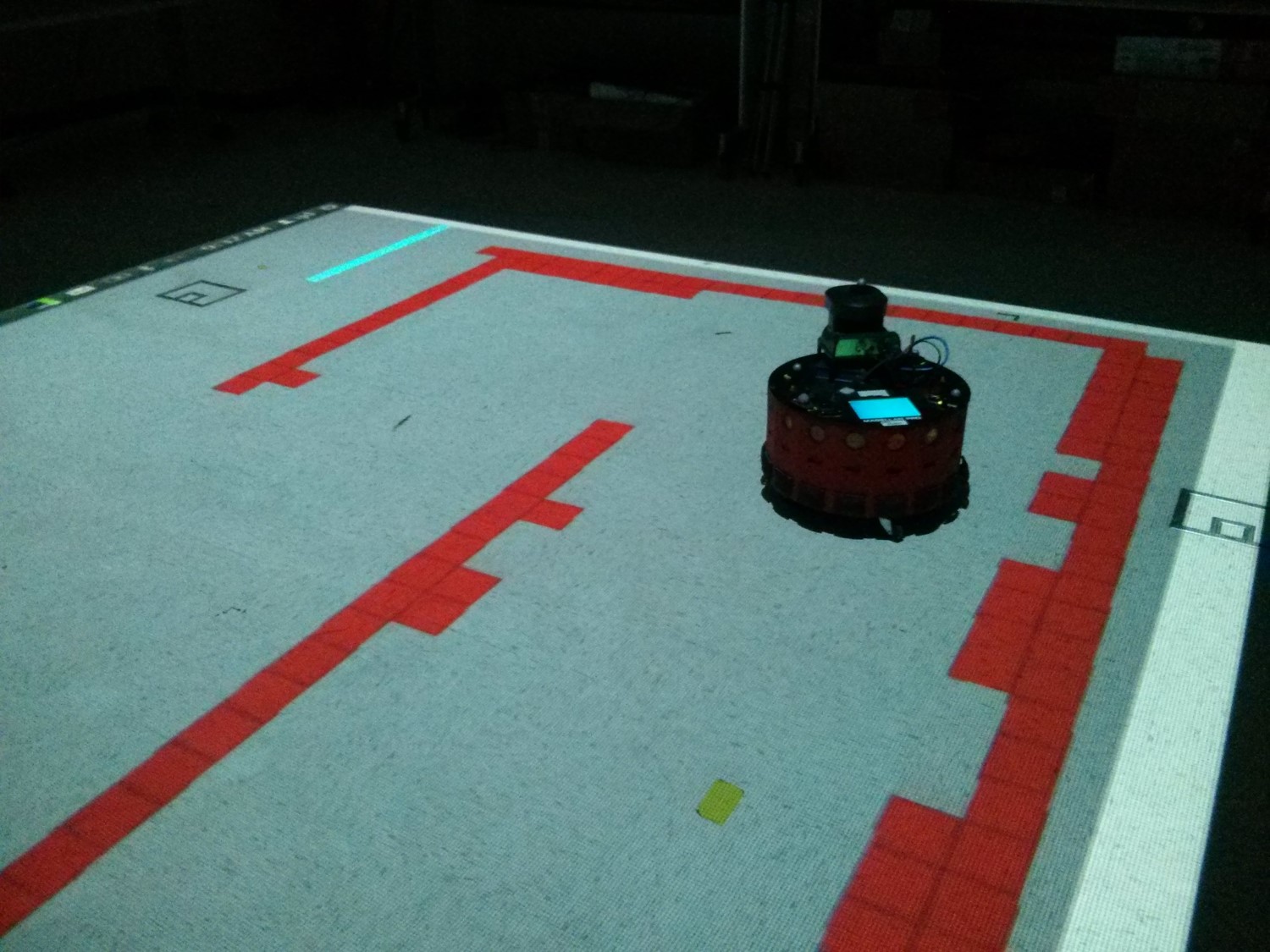} 
\end{array}$
\caption{The left two images show the two environments completely mapped with the Magellan in the center.  The right two images show the Magellan executing Algorithm \ref{alg:MPCAlg}.  As the Magellan approaches the goal, the map is erased so that the Magellan can traverse the next environment as if it were unknown.} 
\label{fig:Mag_environement}
\end{figure*}

\begin{figure*}[!t]
\centering
$
\begin{array}{ccc}
\includegraphics[trim=.5cm 1cm 6cm 0cm, clip=true, width=.3\columnwidth]{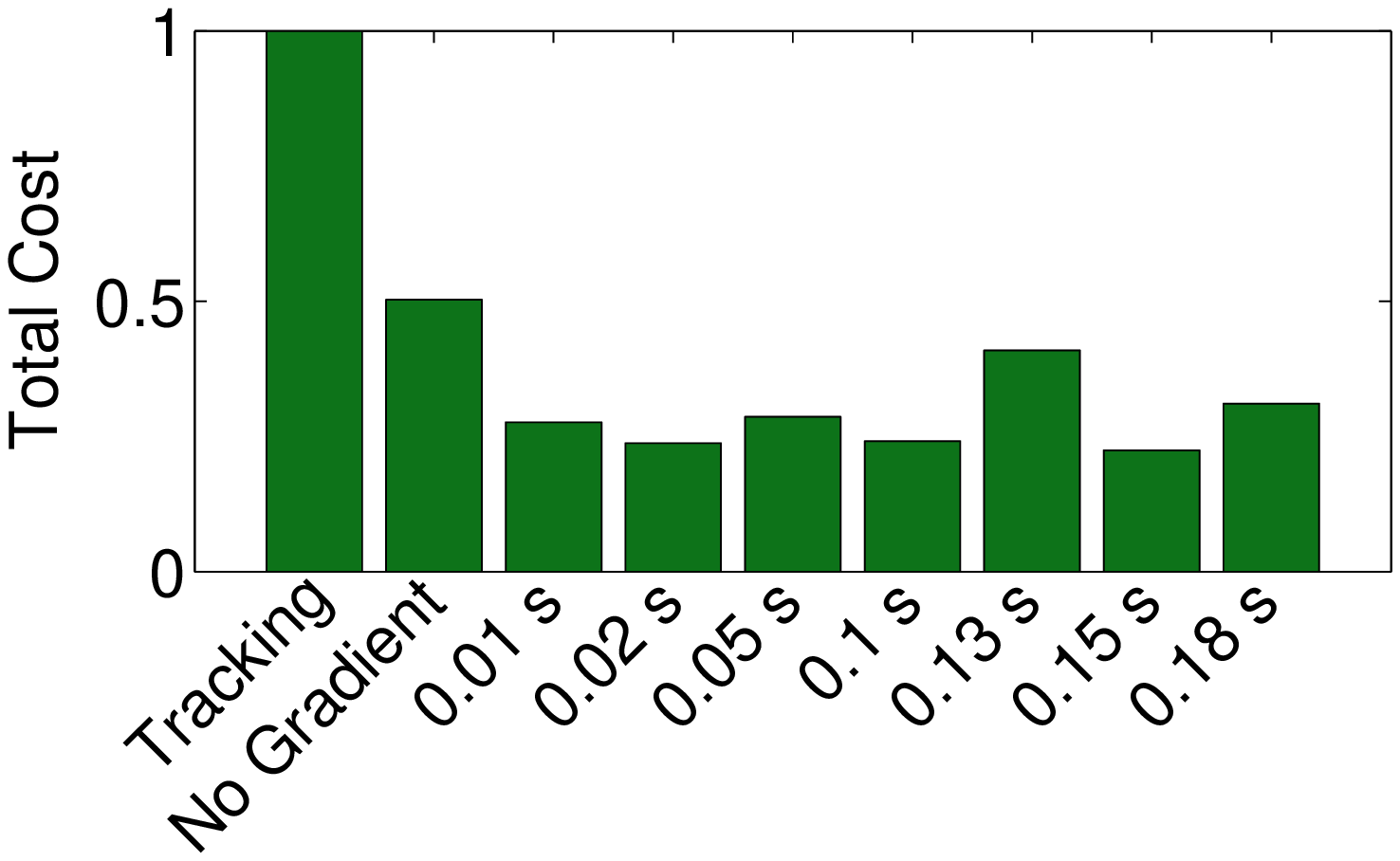} &
\includegraphics[trim=.5cm 1cm 6cm 0cm, clip=true, width=.3\columnwidth]{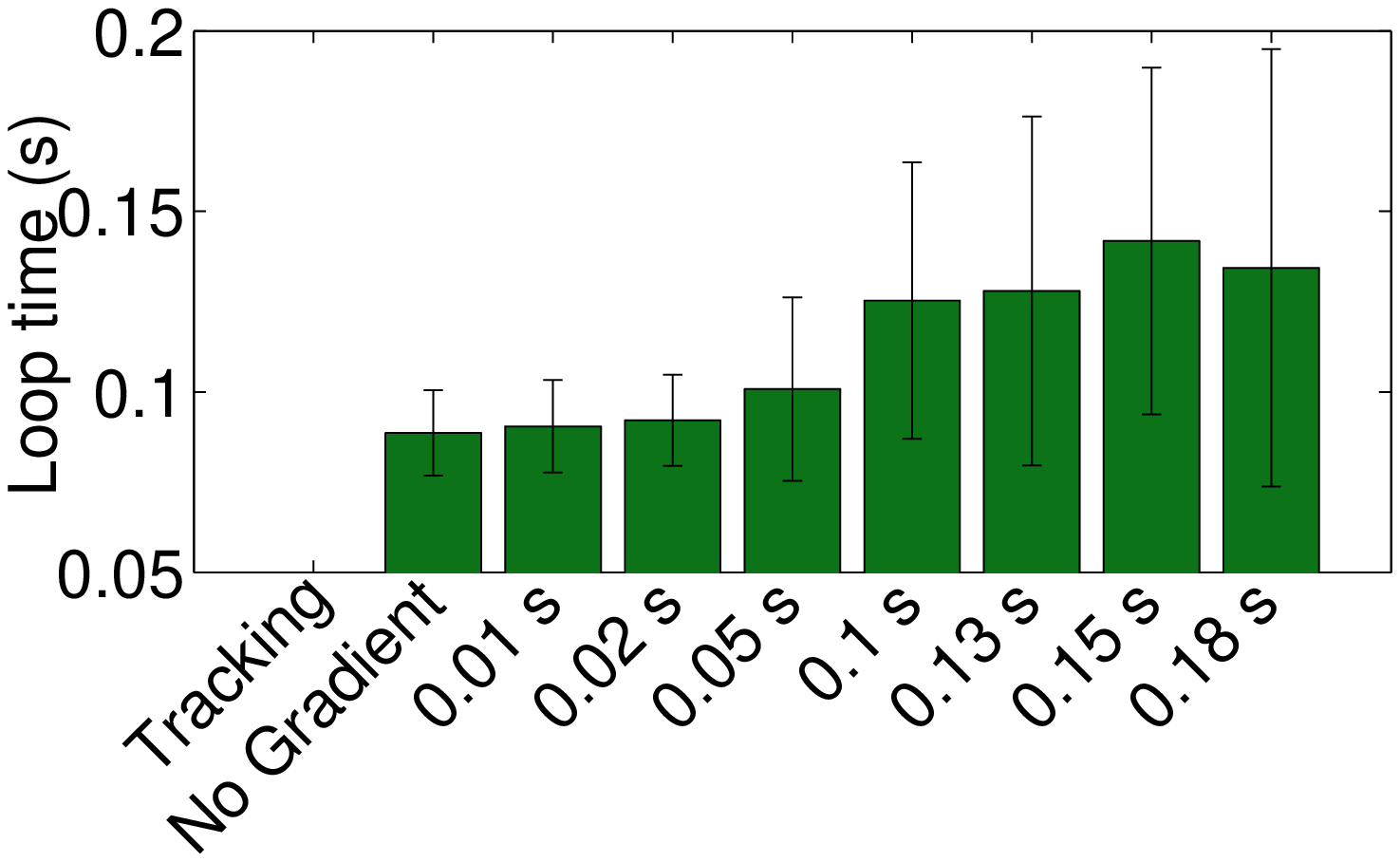} &
\includegraphics[trim=.5cm 1cm 6cm 0cm, clip=true, width=.3\columnwidth]{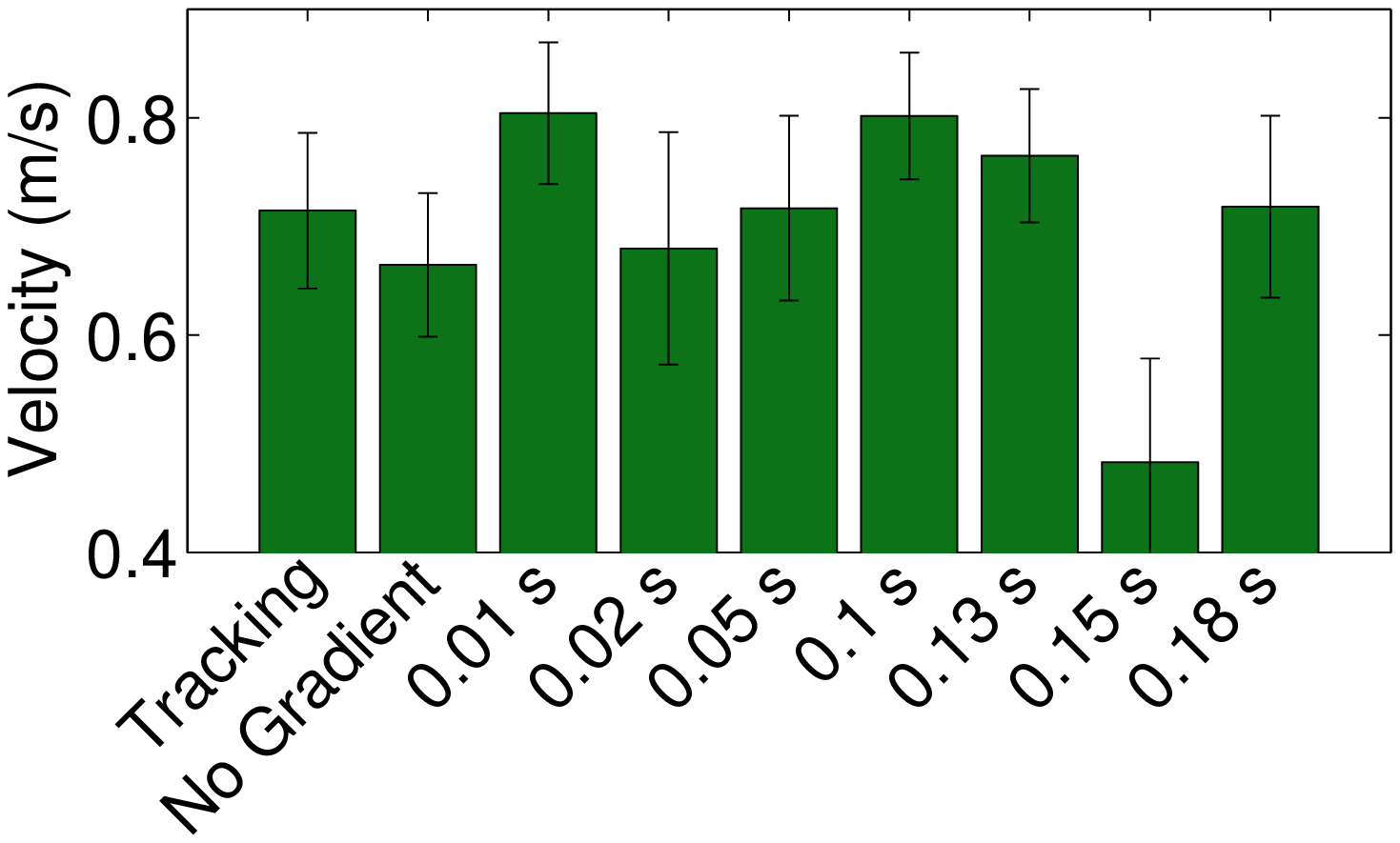} 
\end{array}$
\caption{This figure shows results from the reference tracking control, optimization without gradient descent, and various times allowed for gradient-based optimization.  From left to right is shown the total cost (normalized so that the reference tracking cost equals one), average execution time for a loop of Algorithm \ref{alg:MPCAlg}, and the average velocity for a window time excluding acceleration and deceleration periods.  On both the middle and right plots is shown the standard deviation for each trial. } 
\label{fig:Mag_results}
\end{figure*}

Results to the different trials are shown in Figure \ref{fig:Mag_results}.  Using solely the behavior-based portion of the optimization, the cost was reduced by 50\% when compared to the reference tracking control, with up to a 75\% reduction in cost by including the gradient-based optimization.  It can also be seen that there was no loss in average velocity when comparing the trajectory tracker with the arc-based MPC results.  The highest average velocity seen being 92\% of the desired $.9 \frac{m}{s}$, despite the corridors being less than twice the width of the robot.

While one may expect to see the resulting overall cost to monotonically decrease with allowed optimization, this is not always the case.  As the allowed gradient-descent time approaches $\delta_{execute} = .2$ seconds, the actual time to execute an iteration of Algorithm \ref{alg:MPCAlg} increasingly exceeds $\delta_{execute}$, causing undesirable results.  Also, while it is true that the cost at each time instant will decrease with increased optimization time, when considering the aggregate cost, this need not be the case.  As the information available to the robot is limited, what may seem good at one point in time may prove to have been a poor choice at a future time.  Thus, the costs shown in Figure \ref{fig:Mag_results} are those of local minima and cannot be truly compared beyond the point of noticing the trend that small time allowed for optimization appears to yield good results.

\section{Case Study: Inverted Pendulum Robot}
\label{sec:Segway}
We now examine the problem of performing dual-mode arc-based MPC on a simulation of an inverted pendulum robot.  This provides an example where consideration of the dynamics of the system is very important when planning for the action.  Not only must the planner consider the nonholonomic constraints, it must also maintain balance.  The arc-based MPC algorithm is applied to account for the complicated dynamics while maintaining stability and convergence guarantees.  The section proceeds by first presenting a model, designing control laws, analyzing stability, and ends by giving simulation results.

\subsection{Dynamics of Two-Wheel Inverted Pendulum}
We utilize the model developed in \cite{Kim2005} for an inverted pendulum robot.  The state vector describing the inverted pendulum robot can be expressed as 
\begin{equation}
x = \begin{bmatrix}
x_1 & x_2 & v & \psi & \omega & \phi & \dot{\phi} \end{bmatrix}^T,
\label{eq:segwayStates}
\end{equation}  
where $x_1$, $x_2$, $v$, and $\psi$ are defined as before and $\phi$ is the tilt angle from the vertical, as depicted in Figure \ref{fig:segway_diagram}. This allows the dynamics of the system to be expressed as
\begin{equation}
	\dot{x} = f(x,u) = \begin{bmatrix}
	v \cos(\psi) & 
	v \sin(\psi) &
	\dot{v} &
	\omega &
	\ddot{\psi} &
	\dot{\phi} &
	\ddot{\phi}	
	\end{bmatrix}^T
\label{eq:SegwayDynamics}
\end{equation}
where $\dot{v}$, $\ddot{\psi}$, and $\ddot{\phi}$ are obtained from the following equations: 
\begin{equation}
3(m_c + m_s)\dot{v} - m_s d \cos(\phi) \ddot{\phi} + m_s d \sin(\phi)(\dot{\phi}^2 + \omega^2) = - \frac{1}{R}(\alpha + \beta),
\end{equation}
\begin{equation}
\bigl ( (3 L^2 + \frac{1}{2R^2})m_c + m_s d^2 \sin^2(\phi) + I_2 \bigr ) \ddot{\psi} + m_s d^2 \sin(\phi)\cos(\phi) \omega \dot{\phi} = \frac{L}{R}(\alpha - \beta),
\end{equation}
\begin{equation}
m_s d \cos(\phi) \dot{v} + (-m_s d^2 - I_3) \ddot{\phi} + m_s d^2 \sin(\phi) \cos(\phi) \dot{\phi}^2 + m_s g d \sin(\phi) = \alpha + \beta,
\end{equation}
and the symbols are defined in Table \ref{tab:Symbols}.  Note that, in the simulations, we used the values for the parameters given in \cite{Kim2005}.

\subsection{Control Laws}
\label{ssec:Segway_ctrl}
Two control laws must be defined to apply the dual-mode arc-based MPC algorithm.  The first being a controller to regulate the robot to constant velocities and the second being the reference following controller, which executes time-varying velocities.  Similar to Section \ref{ssec:Mag_ctrl}, we simply mention the point at which a constant velocity control law is straight forward, with the majority of the details focused on the reference tracking control.  

\begin{figure}[!t]
\centering
\includegraphics[width=.25\linewidth]{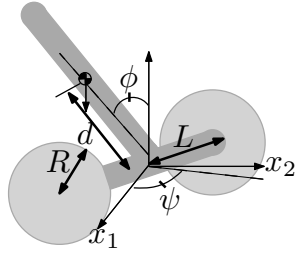}
\caption{Shown is a diagram of the inverted pendulum robot with the symbols defined in Table \ref{tab:Symbols}.}%
\label{fig:segway_diagram}
\end{figure}

\begin{table}
\begin{center}
\caption{This table defines the symbols used in the dynamics of the two-wheel inverted pendulum robot.  The numeric values are given in \protect\cite{Kim2005}. }
\label{tab:Symbols}
		\begin{tabular}{c c c}
		\hline 
		\multicolumn{3}{c}{Table of Symbols}  \\
		\hline
		$m_c$ &\vline & Mass of wheel \\ 
		$m_s$ &\vline & Mass of body \\ 
		$d$ &\vline & Distance from center of wheel axis to center of gravity \\ 
		$L$ &\vline & Half the distance between the wheels \\ 
		$R$ &\vline & Radius of wheels \\ 
		$I_2$  &\vline & Rotational inertia of the body about the $\psi$ axis \\ 
		$I_3$ &\vline & Rotational inertia of the body about the axel \\ 
		$\alpha$. $\beta$ &\vline & Wheel Torques \\ 
		\hline
		\end{tabular}
\end{center} 
\end{table}

To create a velocity-based controller, we use a subset of the states that do not depend on the omitted states, which, when linearized about $x = 0$ and $u = 0$,  produces a system in which the velocities are completely controllable (see \cite{Brogan1974} for details on controllability of linear systems and linearization).  Namely, let $z$ be defined as $z = \begin{bmatrix}
v & \omega & \phi & \dot{\phi} \end{bmatrix}^T$, the following linearized dynamics are obtained:

\begin{equation}
\dot{z} = \begin{bmatrix}
0 & 0 & a_{13} & 0 \\ 
0 & 0 & 0 & 0 \\ 
0 & 0 & 0 & 1 \\ 
0 & 0 & a_{43} & 0
\end{bmatrix} z + 
\begin{bmatrix}
b_{11} & 0 \\
0 & b_{21} \\
0 & 0 \\
b_{41} & 0
\end{bmatrix} u
\label{eq:zLinearized}
\end{equation} 
where
$$
a_{13} = \frac{d^2 g m_s^2}{2 d^2 m_s^2 + 3 m_c d^2 m_s + 3 I_3 m_s + 3 I_3 m_c} = 2.1639,
$$
$$
a_{43} = \frac{3 d g m_s^2 + 3 d g m_c m_s}{(2 d^2 m_s^2 + 3 m_c d^2 m_s + 3 I_3 m_s + 3 I_3 m_c)} = 72.4858,
$$
$$
b_{11} =  -\frac{m_s d^2 + R m_s d + I_3}{R (2 d^2 m_s^2 + 3 m_c d^2 m_s + 3 I_3 m_s + 3 I_3 m_c)} = -1.6687,
$$
$$
b_{21} = \frac{2 L R}{6 m_c L^2 R^2 + 2 I_2 R^2 + m_c} = 0.0290,
$$
$$
b_{41} = -\frac{3 R m_c + 3 R m_s + d m_s}{R (2 d^2 m_s^2 + 3 m_c d^2 m_s + 3 I_3 m_s + 3 I_3 m_c)} = -24.1514,
$$
and 
\begin{equation}
u_1 = (\alpha + \beta), \mbox{ } u_2 = (\alpha - \beta).
\end{equation}

From this linearization, it is seen that $z$ can be further divided into states to control the translational velocity, $z_v = [v, \mbox{ }\phi, \mbox{ }\dot{\phi}]^T$ and the rotational velocity, $\omega$.  From this, it is straight forward to design a control law to execute constant velocities.  Note that similar constructions were given in \cite{Kim2005,Nawawi2006} without recognizing that a linear combination of the control inputs can be used to control each velocity independently.

To control a time varying velocity, a potential problem arises as the same input must be used to both stabilize the pendulum and control the velocity.  To overcome this, we control the tilt angle, which can in turn be used to control the velocity.  This is made possible due to the fact that the tilt angle is much more responsive and that we can define the reference trajectory to maintain nearly constant translational acceleration.  It is also worth noting that it is much easier to maintain stability of the pendulum by controlling the tilt angle to a desired value without any feedback on the velocity.  The feedback on the velocity will come in the form of adjusting the desired tilt angle.

As controlling the desired tilt angle directly affects the translational acceleration, we slightly modify the approach for controlling $y_\epsilon$ given in Section \ref{sec:Unicycle}.  Instead of controlling $\dot{y}$, $\ddot{y}$ is controlled as $\ddot{y} = u_\epsilon$.  This corresponds to the formulation for control presented in \cite{Olfati2002}, namely 
\begin{equation}
\label{eq:Seg_eps_cntrl}
u_\epsilon(t) = \dot{y}_{d} + k_p(y_{d} - y) + k_d (\dot{y}_{d} - \dot{y}_{\epsilon}),
\end{equation}
where $k_d$ and $k_p$ are constants.  Similar to (\ref{eq:Mag_vbar}), the desired accelerations, $\dot{\bar{v}}_d = [\dot{v}_d \mbox{ } \dot{\omega}_d]^T$, can be written as:
\begin{equation}
\dot{\bar{v}}_d = \begin{bmatrix}
 	-\omega \sin(\psi) & \omega \cos(\psi) \\
 	-\frac{\omega \cos(\psi)}{\epsilon} &  	-\frac{\omega \sin(\psi)}{\epsilon}
 	\end{bmatrix} \dot{y} 
 	+ \begin{bmatrix}
 	\cos(\psi) & \sin(\psi) \\
	-\frac{\sin(\psi)}{\epsilon} & \frac{\cos(\psi)}{\epsilon}
 	\end{bmatrix} u_\epsilon.
 	\label{eq:Seg_vbar}
\end{equation}
Given the desired accelerations, we allow the inverted pendulum control to take the form
\begin{equation}
\label{eq:Seg_ref_ctrl}
\begin{split}
	u_1 &= -k_\phi e_\phi - \frac{a_{43}}{b_{41}}\phi_d \\
	u_2 &= \frac{\dot{\omega}_d}{b_{21}}
\end{split}
\end{equation}
where
$$
	\phi_d = \frac{b_{41} a_{13}}{a_{13}^2 b_{41} - b_{11} a_{43}} \dot{v}_d,
$$
$$
	e_\phi = \begin{bmatrix}
	\phi - \phi_d \\
	\dot{\phi}
	\end{bmatrix}
$$
and $k_\phi$ is a feedback matrix.  Assuming constant acceleration, the dynamics of the error for the linearized system becomes
$$
	\dot{e}_\phi = (\begin{bmatrix}
	0 & 1 \\
	a_{43} & 0
	\end{bmatrix} - 
	\begin{bmatrix}
	0 \\
	b_{41}
	\end{bmatrix} k_\phi) e_\phi,	
$$
which has exponential convergence rates to the desired tilt angle.  Similar to the development in Section \ref{ssec:Mag_ctrl}, Assumptions \ref{as:limitedAcc_RefTraj} and \ref{as:proximit_vel_Mag} can be made to ensure the convergence to the desired reference trajectory.  Similarly, to ensure that the control law can converge to the necessary characteristics within $\delta_{execute}$ seconds, a cost barrier could be introduced.  However, in the same fashion as mentioned in Section \ref{ssec:Mag_ctrl}, we found that with the initialization step, this was not needed in practice.

\subsection{Maintaining Balance}
To ensure that the pendulum robot maintains balance, we analyze both the stability of the control law presented in the previous section as well as the ability for Algorithm \ref{alg:MPCAlg} to maintain balance.  While the previous section considered the linearized dynamics, this section considers the balancing on the full, nonlinear dynamics.

Under an assumption that $\dot{v}_d$ is constant, we can utilize the Lyapunov function $V = \frac{1}{2} e_\phi ^T P e_\phi$ and evaluate numerically the space such that $\dot{V} = e_\phi ^T P \dot{e}_\phi < 0$.  Using the feedback gain matrix $k_\phi = [-8.5223 \mbox{ } -.9922]$ and 
$$
P = \begin{bmatrix}
2.8929 & 0.0037 \\
0.0037 & .0210
\end{bmatrix},
$$
we found that $\dot{V} < 0$ $\forall \mbox{ }e_\phi \neq 0$ for $|\phi| \leq .5$, $|\dot{\phi}| < 5$, and extreme initial conditions(i.e. values greater than we ever observed) of $v = 2.0$ and $\dot{\omega} = 2$ for all $|\phi_d| \leq 0.35$.  To ensure stability, we place a bound on the desired tilt angle such that $|\phi_d| < .25$.

To ensure that the robot will maintain balance while executing Algorithm \ref{alg:MPCAlg}, a barrier cost is introduced as part of the instantaneous cost.  The barrier cost takes the form:
\begin{equation}
\label{eq:Seg_phi_cost}
-\rho_6 \log \bigl(\frac{\phi_{max}^2 - \phi(t)^2}{\phi_{max}^2} \bigr),
\end{equation}
where $\phi_{max} = .5$ is the maximum allowable tilt angle and $\rho_6$ is a constant.  This barrier cost will not permit step \ref{MPCAlg:Optimize} of Algorithm \ref{alg:MPCAlg} to produce a solution with a resulting $\phi(t) > \phi_{max}$ for some $t \in [t_0, t_0 + \Delta]$ if it is initialized with a solution that satisfies the constraint.  Moreover, the final control given in (\ref{eq:Seg_ref_ctrl}) will present an admissible solution with $\phi(t) < \phi_{max}$ $\forall t \in [t_0, t_0 + \Delta]$ at the next iteration of the algorithm.  All other costs discussed in Section \ref{ssec:UnicycleCosts} remain the same.

\subsection{Results}

We utilize the inverted pendulum robot to demonstrate the ability of dual-mode arc-based MPC to take a complicated dynamic model into account and ensure stability.  Results were obtained on a simulation of the inverted pendulum robot through the environment shown in Figure \ref{fig:Seg_environment} using a dual-core i7 2.67 GHz processor.  The mapping and visualization was again performed using ROS, with a simulated laser range finder giving data to ROS to form the map.  As before, several trials were performed to evaluate the performance of the dual-mode arc-based MPC algorithm.  

\begin{figure*}[!t]
\centering
$
\begin{array}{cccc}
\includegraphics[trim=2cm 2cm 1cm 2cm, clip=true, width=.22\columnwidth]{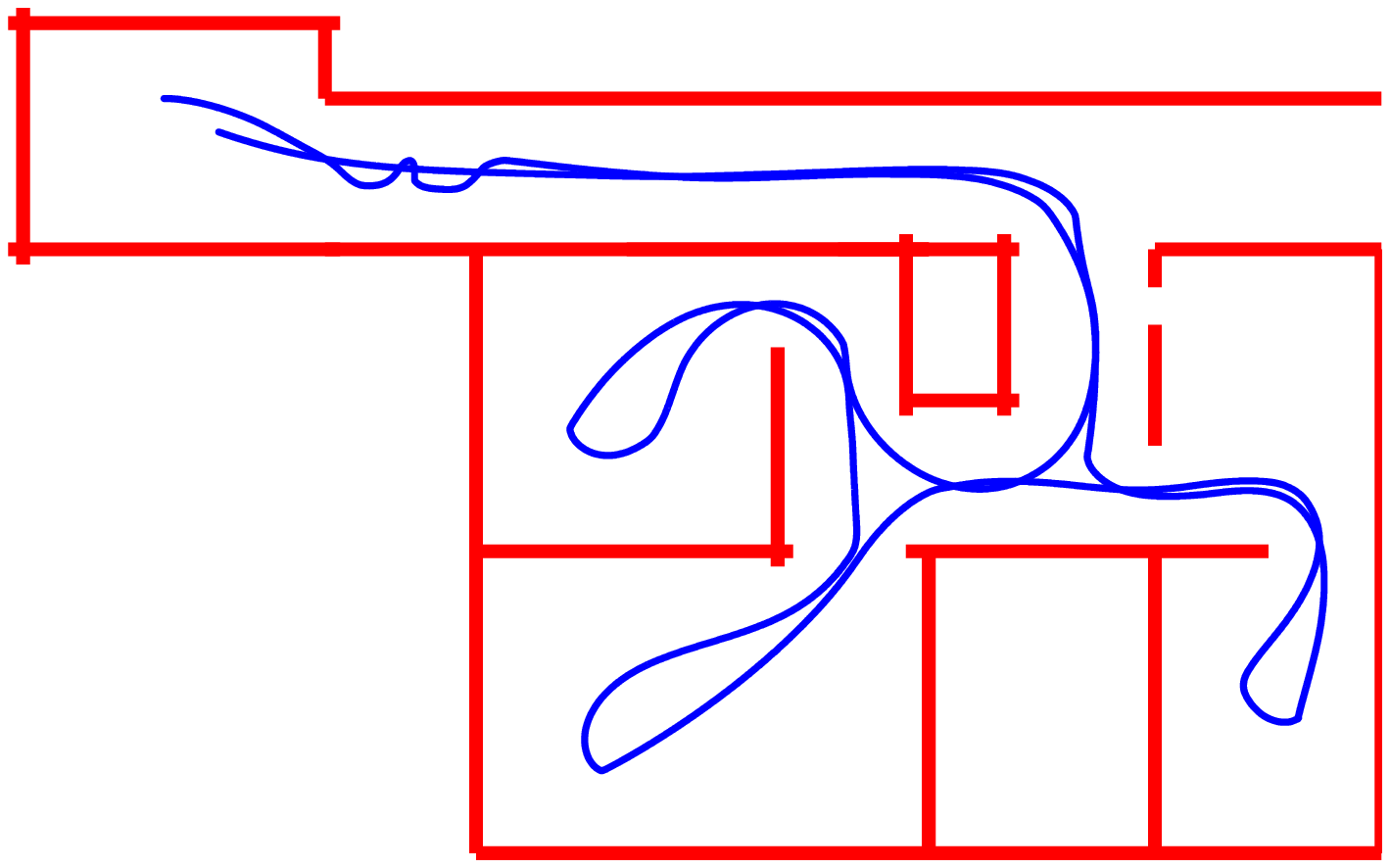} &
\includegraphics[trim=12cm 14cm 6cm 0cm, clip=true, width=.22\columnwidth]{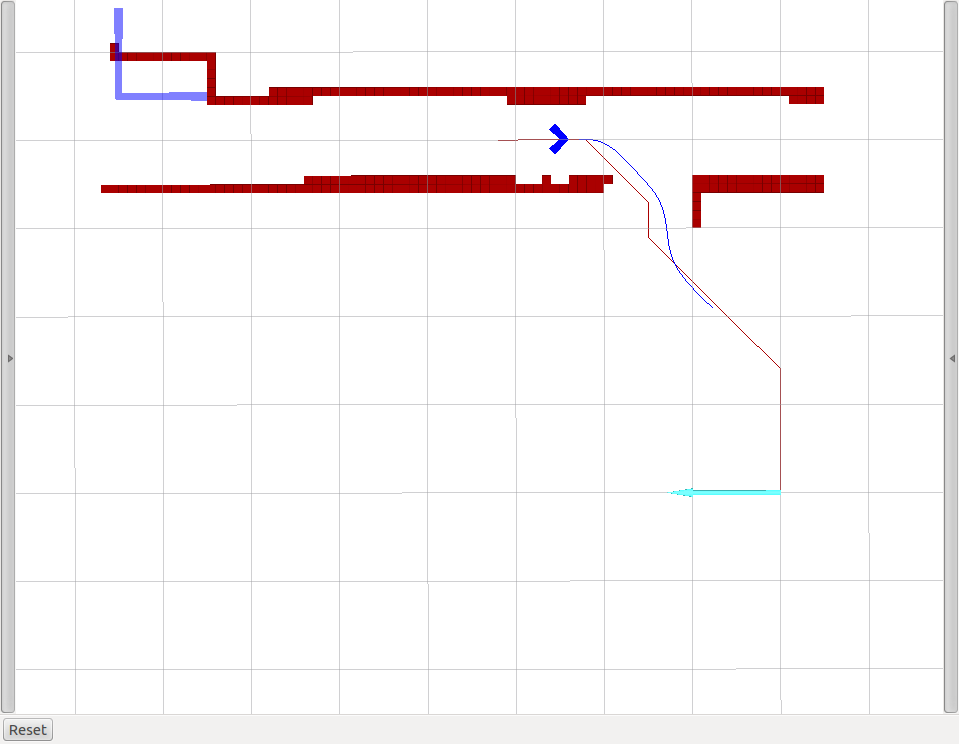} &
\includegraphics[trim=16cm 8cm 2cm 6cm, clip=true, width=.22\columnwidth]{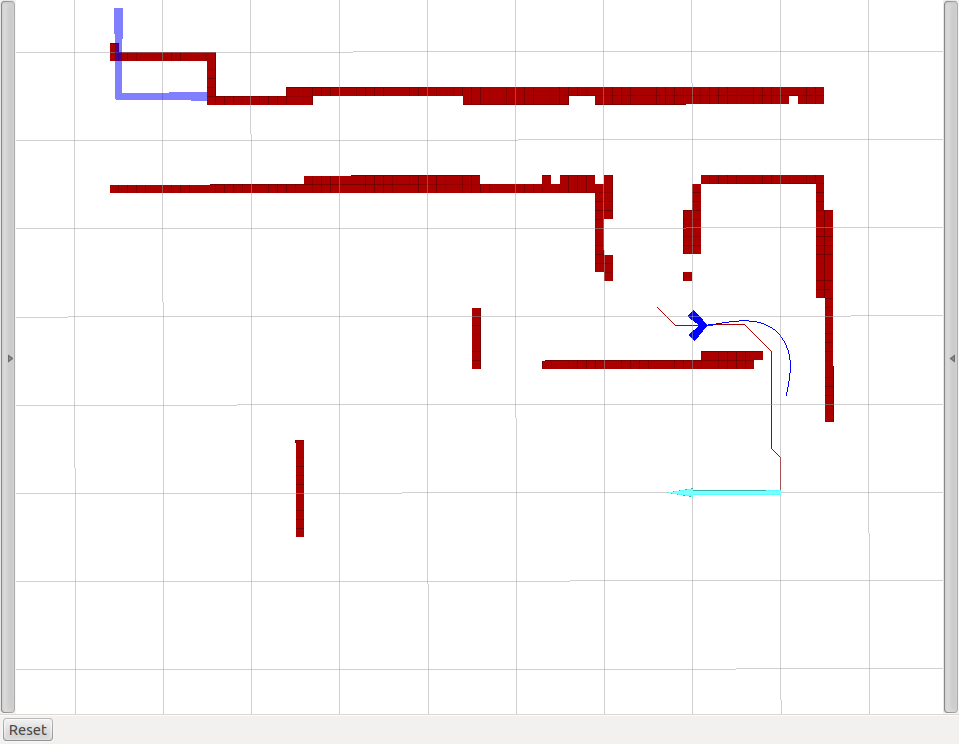} &
\includegraphics[trim=9cm 8cm 9cm 6cm, clip=true, width=.22\columnwidth]{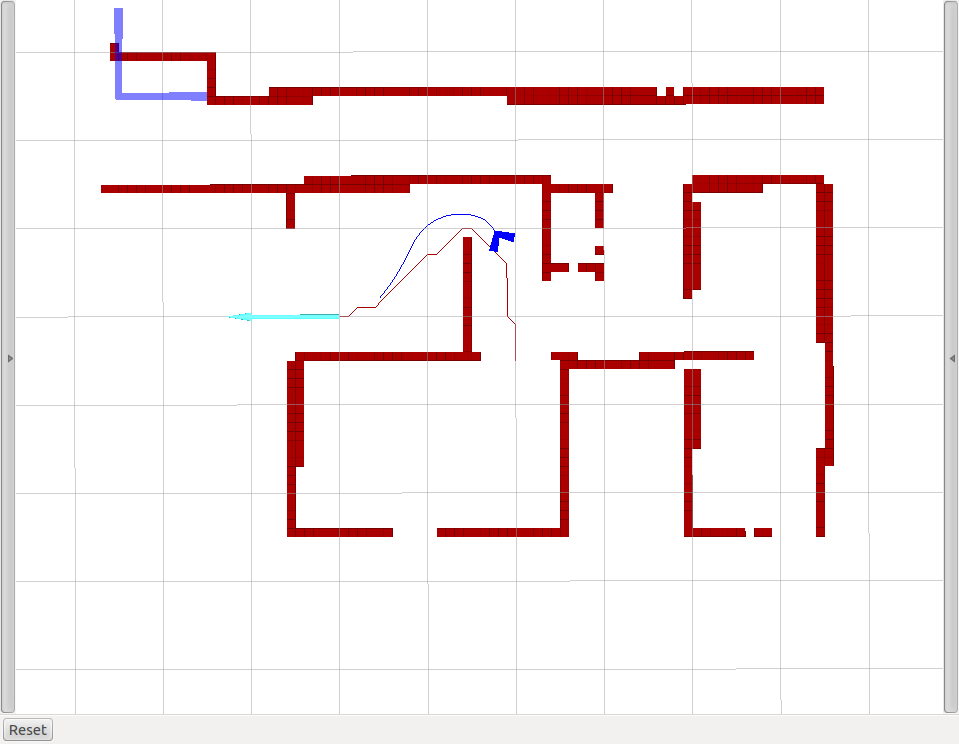} 
\end{array}$
\caption{On the left is shown an executed path through the environment.  The right three images show the robot and current map at different positions along the trajectory.  The robot is shown as a triangle.  The line extending from it is the trajectory created by the arc-based MPC framework.}
\label{fig:Seg_environment}
\end{figure*}

\begin{figure*}[!t]
\centering
$
\begin{array}{ccc}
\includegraphics[trim=.5cm 1cm 6cm 0cm, clip=true, width=.3\columnwidth]{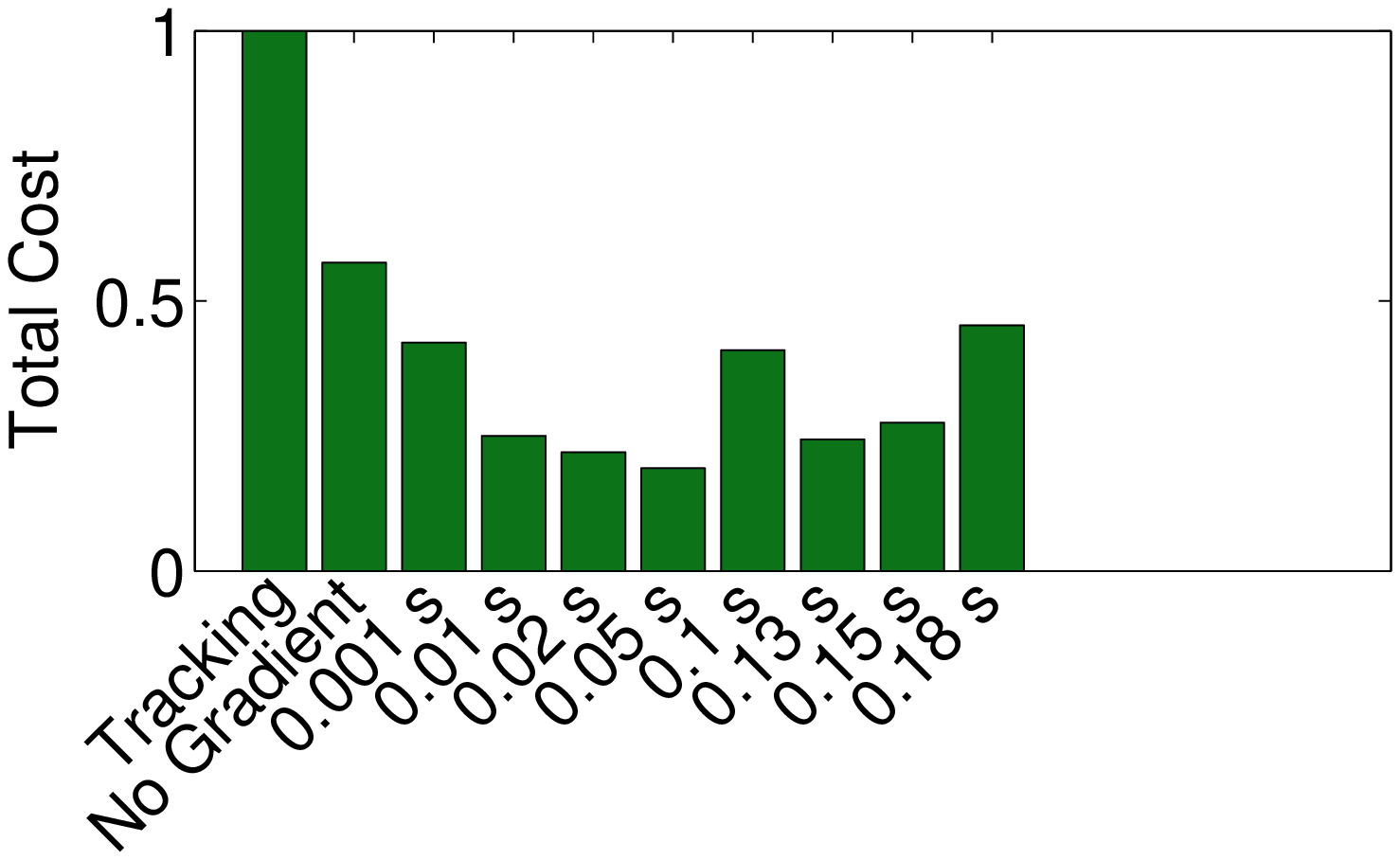} &
\includegraphics[trim=.5cm 1cm 6cm 0cm, clip=true, width=.3\columnwidth]{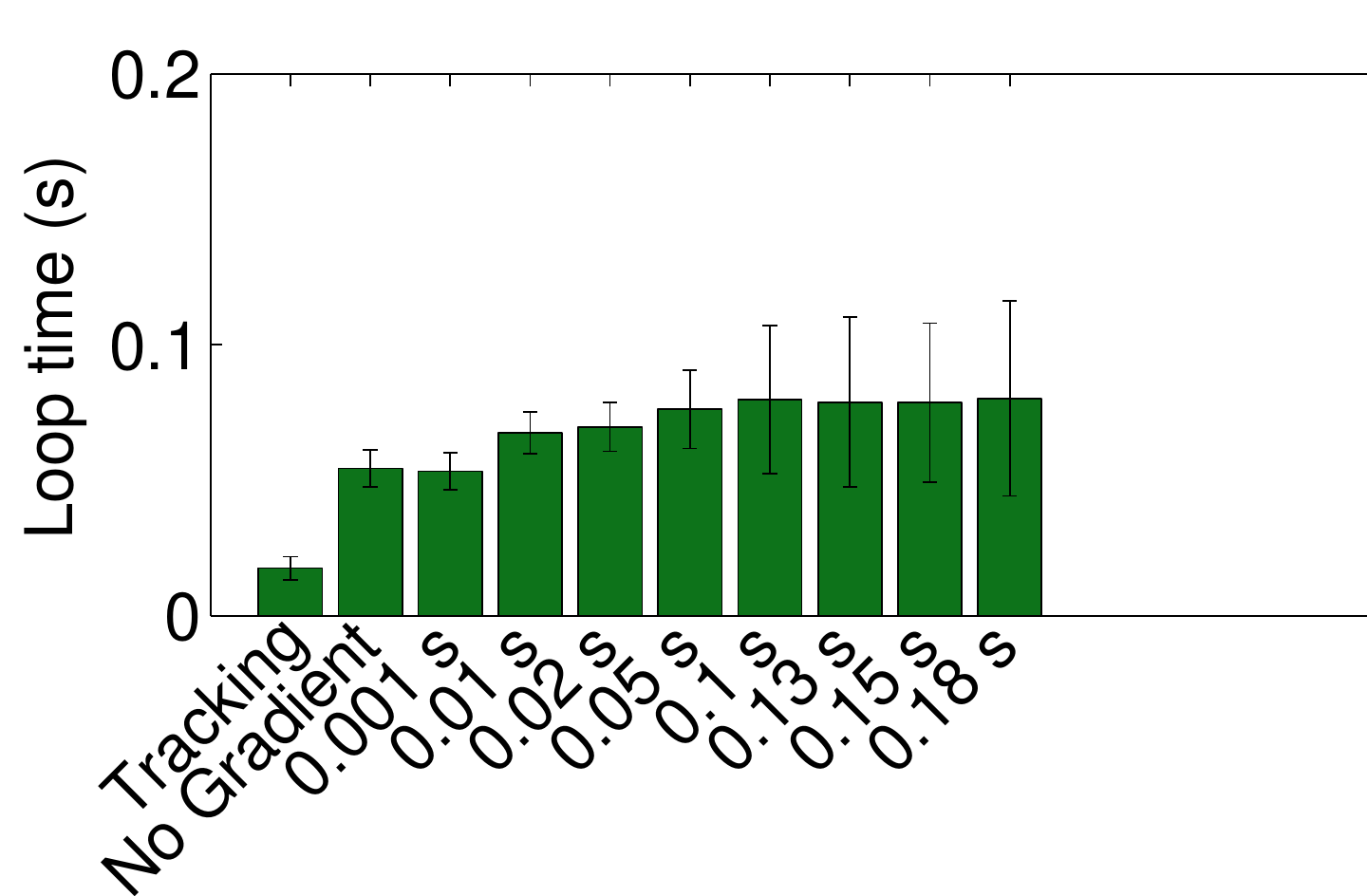} &
\includegraphics[trim=.5cm 1cm 6cm 0cm, clip=true, width=.3\columnwidth]{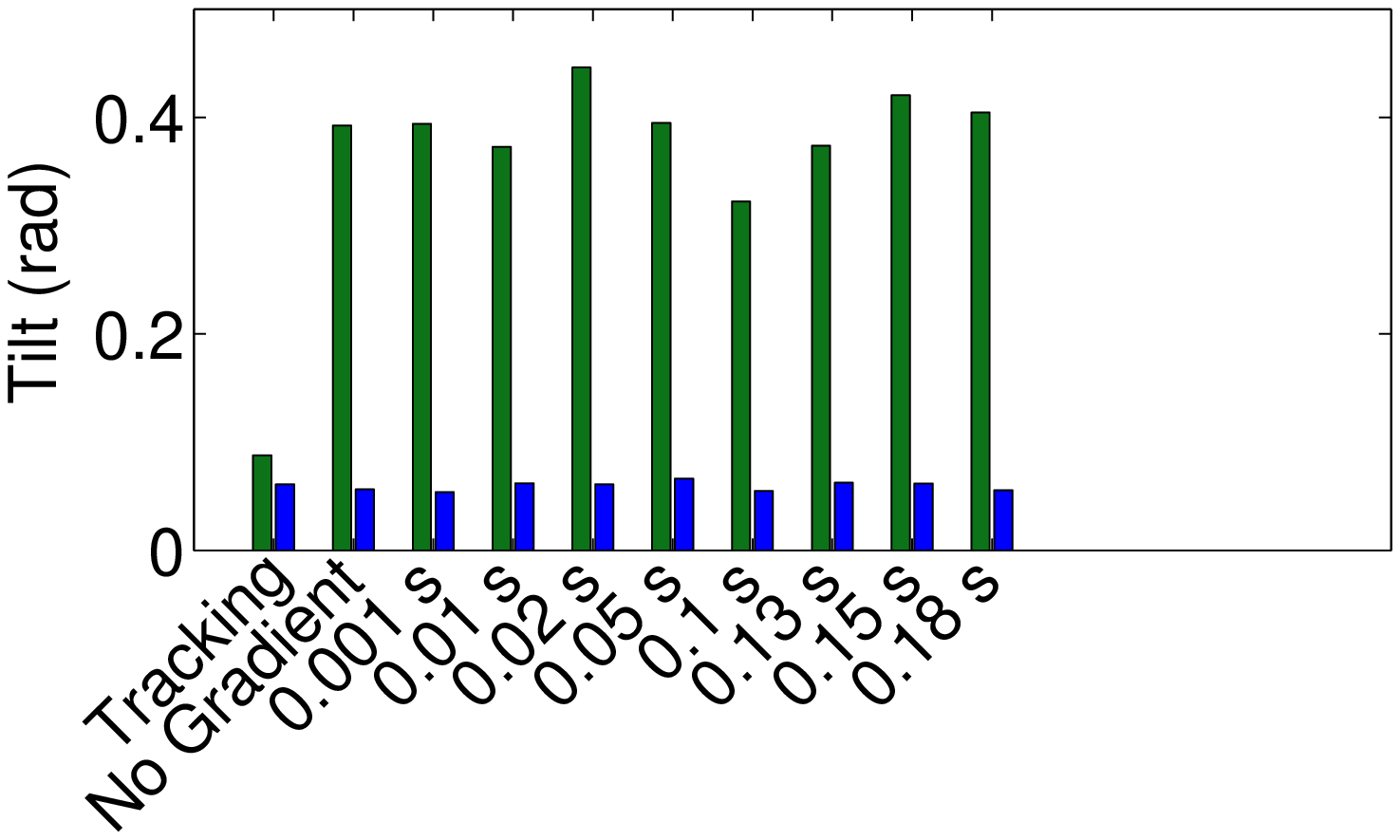} 
\end{array}$
\caption{This figure shows results for the inverted pendulum robot using the reference tracking control, optimization without gradient descent, and various times allowed for gradient-based optimization.  From left to right is shown the total cost (normalized so that the reference tracking cost equals one), average execution time for a loop of Algorithm \ref{alg:MPCAlg}, and average and maximum tilt angle for each trial.} 
\label{fig:Seg_results}
\end{figure*}

Each trial consisted of navigating to a series of waypoints throughout the environment, with an $A^*$ path planner being used to plan two dimensional paths between waypoints.  The robot was given a desired speed of $1 \frac{m}{s}$, with each trial having the robot approach that speed between waypoints.   The trials included using the reference tracking control, arc-based MPC without gradient descent, and using several limits on the allowed time for gradient descent.  Results are shown in Figure \ref{fig:Seg_results} and Extension 1.

Several trends can be observed on the results.  Similar to before, the behavior-based portion of the optimization significantly reduces the cost when compared to the reference tracking control.  The inclusion of a gradient-based optimization step provides even better results.  By observing the loop time versus time allotted for gradient descent, it is apparent that convergence of the optimization at each time step is achieved rather quickly despite the complicated dynamic model.  The average time to execute a loop of Algorithm \ref{alg:MPCAlg} is less than $\frac{1}{2} \delta_{execute}$, even as the alloted time approaches $\delta_{execute}$.  Most important to the development of this section, the robot maintained balance.  The maximum tilt angle in each trial is below the tilt constraint, with the average being much smaller.

\section{Conclusion}
\label{sec:Conclusion}
In this paper we have developed a dual-mode arc-based MPC algorithm which ensures obstacle avoidance and guarantees convergence to a desired goal location despite complicated dynamic models.  Dynamic motion constraints, limited accelerations, and stability concerns were considered in the examples without imposing unreasonable computational demands.  The algorithm was applied to a decade old Irobot Magellan-pro, which maintained an average of 80\% of its top speed while navigating through corridors less than twice the robot's width without collision.  The ability of the algorithm to deal with more complicated dynamics, including stability concerns, was illustrated through the example of an inverted pendulum robot where it reached high speeds while traversing a complicated environment without exceeding a pre-specified maximum tilt angle.

\section*{Funding}
\small{This research received no specific grant from any funding agency in the public, commercial, or not-for-profit sectors.}

\bibliographystyle{apacite}
\bibliography{Submission}

\end{document}